\documentclass[a4paper,fleqn]{format}
\usepackage[authoryear,longnamesfirst]{natbib}
\usepackage{xcolor,romannum,hyperref,cleveref,enumitem}
\usepackage{amsmath}
\usepackage{amsthm}
\usepackage{mathtools}
\usepackage{placeins}
\usepackage{longtable}
\usepackage{booktabs}
\DeclarePairedDelimiter{\norm}{\lVert}{\rVert}
\newcommand{\red}[1]{\textcolor{black}{#1}}

\newtheorem{proposition}{Proposition}
\usepackage{makecell}
\usepackage{capt-of}
\crefname{enumi}{}{}
\Crefname{enumi}{}{}
\crefname{enumii}{}{}
\Crefname{enumii}{}{}
\crefname{enumiii}{}{}
\Crefname{enumiii}{}{}
\crefname{enumiv}{}{}
\Crefname{enumiv}{}{}
\crefname{equation}{Eq.}{Eqs.}
\usepackage{chngcntr}

\def\tsc#1{\csdef{#1}{\textsc{\lowercase{#1}}\xspace}}
\tsc{WGM}
\tsc{QE}
\tsc{EP}
\tsc{PMS}
\tsc{BEC}
\tsc{DE}
\begin{document}
\let\WriteBookmarks\relax
\def\floatpagepagefraction{1}
\def\textpagefraction{.001}
\shorttitle{Hierarchical Graph Learning for Calendar Spread Strategies in Commodity Futures Markets}
\shortauthors{Y. Hong and D. Klabjan}
%\begin{frontmatter}
\renewcommand{\thepage}{\arabic{page}}

\title [mode = title]{Hierarchical Graph Learning for Calendar Spread Strategies in Commodity Futures Markets}

\author[1]{Yoonsik Hong}[% type=editor,
                        % auid=000,bioid=1,
                        % prefix=Sir,
                        % role=Researcher,
                        orcid=0009-0008-2002-7266]
\cormark[1]
% \fnmark[1]
\ead{YoonsikHong2028@u.northwestern.edu}
\credit{Conceptualization, Data curation, Investigation, Methodology, Writing---original draft, Writing---review and editing, and Software}

\affiliation[1]{organization={Department of Industrial Engineering and Management Sciences, Northwestern University},
                city={Evanston},
                postcode={60208}, 
                state={IL},
                country={USA}}[foot]

\author[1]{Diego Klabjan}[]
\ead{d-klabjan@northwestern.edu}
\credit{Supervision, Methodology, Writing---review and editing, and Resources}

\cortext[cor1]{Corresponding author}

\begin{abstract} 
Commodity futures can be represented hierarchically, with underlying assets at the upper level and individual futures contracts at the lower level.
Entities at each level can be connected by edges reflecting inherent correlations, with cross-level edges capturing contract-to-underlying asset connections.
Building on our observations of these structures, we propose a hierarchical graph learning approach for calendar spread (CS) strategies in commodity futures markets, addressing two significant gaps in the machine-learning literature: (i) the absence of learning-based methods for CS strategies in futures markets, and (ii) the lack of consideration of maturity-dependent interrelationships across commodity futures. 
We first establish the efficacy of CS strategies by analytically showing that CS strategies can possess higher risk-adjusted returns, measured by the information ratio, and lower risk, measured by variance and delta, than long-only strategies. 
We then introduce a method to convert learning-based predictions into CS positions. 
Next, we develop a hierarchical graph learning method that predicts futures price movements by utilizing the maturity-dependent interrelationships, thereby yielding a CS trading algorithm. 
Empirical results on commodity futures markets traded on the Chicago Mercantile Exchange Group demonstrate that our method outperforms benchmark models in both prediction and trading performance. 
We find that maturity-dependent interrelationships across commodity futures are instrumental in prediction and that CS trading based on hierarchical graph learning is effective for statistical arbitrage. 
\end{abstract}

\begin{keywords}
Calendar Spread Strategy \sep Graph Learning \sep Statistical Arbitrage \sep Commodity Futures \sep Deep Learning
\end{keywords}

\maketitle

\setcounter{page}{1} %I added

\section{Introduction}

We address the problem of exploiting statistical arbitrage opportunities in commodity futures markets through a calendar spread (CS) strategy based on hierarchical graph learning that captures information embedded in maturity-dependent interrelationships across commodity futures. 
A futures contract is an agreement between two parties in which the contract price is fixed at initiation, and at maturity, one party delivers the underlying asset (or settles in cash), while the other pays the predetermined price \citep{hull2013fundamentals}.
If the underlying asset is a commodity, the contract is referred to as a commodity futures contract. 
Statistical arbitrage refers to cash flows that generate profits by exploiting mispricing under acceptable risk \citep{lazzarino2018statistical}, and deep-learning–based approaches for capturing such opportunities have been actively studied across various financial markets (e.g., \cite{guijarro2025deep,hong2025graph}). 
To exploit statistical arbitrage in commodity futures markets, we employ a CS strategy that takes long and short positions in futures contracts with the same underlying commodity but different times to maturity (TTMs).

No prior learning-based studies address CS trading, despite its widespread use and extensive empirical studies in futures markets \citep{clarke2013fundamentals,szymanowska2014anatomy,boons2019basis,CMEFuturesSpreads}. Trading strategies in existing learning-based futures studies---such as single-contract directional trading and top-K selection---are highly likely to be exposed to underlying asset price risk. 
Such exposure to volatile underlying price movements, rather than relatively stable cost-of-carry components (e.g., interest rates and storage costs), can deteriorate risk-adjusted trading performance. 
Motivated by this limitation, we show that CS strategies can mitigate risk and yield superior risk-adjusted returns.
We also introduce a method to convert learning-based predictions into CS positions.

Most existing learning-based studies overlook TTM-dependent interrelationships across commodity futures. 
Commodity futures contracts are economically linked \citep{chng2009economic,vacha2013time,cai2020co,furlong1996commodity,hammoudeh2009relationships,boakye2024commodity,liu2023nonlinear}, but this structure is rarely incorporated into learning-based models. 
Although some recent studies \citep{hu2025graph,tan2024futures} adopt graph learning to model relationships among commodity futures, they do not distinguish these relationships by TTM.
However, these relationships are inherently TTM-dependent \citep{brennan1976supply,gibson1990stochastic,schwartz1997stochastic,casassus2005stochastic}. 
For example, gasoline and diesel inventories may remain low for three months due to a war, but they are expected to recover after six months as supply chains recover. 
When predicting the price of diesel futures with a six-month TTM, the price information of gasoline futures with a three-month TTM should not be treated in the same way as that of gasoline futures with a six-month TTM. 
Ignoring TTM thus limits the ability of graph-learning models to capture TTM-dependent information. 
To address this gap, we propose a hierarchical graph learning approach.

First, we verify the efficacy of CS strategies in terms of risk and risk-adjusted return.
We theoretically establish their advantages over long-only (LO) strategies. 
Since any trading position can be expressed as a linear combination of CS and LO positions, we compare the two strategies. 
We show that CS strategies possess lower variance and risk exposure than LO strategies under some assumptions. 
Moreover, we show that if an LO strategy outperforms an uninformed strategy by a certain threshold, then a CS strategy derived from the LO strategy outperforms the LO strategy in terms of risk-adjusted return. 
We also present a method to assess whether the assumptions are realistic. 
We further introduce a method to map learning-based predictions into CS positions.

Next, we develop a hierarchical graph learning approach that captures maturity-dependent relationships across commodity futures. 
We construct a bi-level graph in which underlying commodities form upper-level nodes and individual futures contracts form lower-level nodes, connected via cross-commodity, commodity-contract, and cross-contract relationships. 
Since commodities do not share a common contract maturity grid, 
we introduce a graph lifting method that constructs TTM-aligned virtual futures nodes for each commodity on a virtual TTM grid shared across commodities.
On this lifted graph, we design a bi-level convolution architecture that alternates between (i) propagating information across commodities through virtual futures at the same TTM (TTM-aligned propagation) and (ii) capturing intra-commodity term structure dynamics via message passing over neighboring maturities (different-TTM propagation). 
By distinguishing relationships based on TTM through distinct parameterizations and iteratively integrating information across both dimensions, the proposed method captures interrelationships across commodity futures while explicitly accounting for maturity dependence.

Our experiments demonstrate that the proposed method generates statistical arbitrage and that both CS strategies and maturity-dependent interrelationships are effective. 
We provide evidence that CS strategies can exhibit lower variance, lower exposure risk, and higher IRs than LO strategies in commodity futures markets.
Our hierarchical graph learning achieves the lowest mean squared error with statistical significance, and its CS trading attains the best risk-adjusted returns by capturing TTM-dependent interrelationships among commodity futures, compared to other learning-based methods that either ignore such interrelationships or do not account for their maturity dependence.

This paper makes the following main contributions.
\begin{itemize}
\item To the best of our knowledge, this study is the first to propose a hierarchical graph learning approach for predicting commodity futures price movements while accounting for maturity-dependent interrelationships across commodity futures. 
\item We establish the efficacy of calendar spread strategies by presenting propositions showing that they can exhibit higher information ratios, lower variance, and lower exposure risk than long-only strategies in commodity futures markets.
\item Empirical results show that the hierarchical graph learning method outperforms benchmark methods with statistical significance in prediction tasks. 
We find that maturity-dependent interrelationships across commodity futures are instrumental in predictive performance. 
\item Our calendar spread strategy attains daily information and Sortino ratios of 0.0846 and 0.1241, corresponding to improvements of 96\% and 82\% over the benchmarks, and 75\% and 113\% over the S\&P 500, respectively. 
These results indicate that calendar spread trading based on hierarchical graph learning in futures markets is effective for statistical arbitrage. 
\end{itemize}

\section{Related Work}
No learning-based studies address CS strategies in futures markets. 
Existing work applies various learning-based methods to prediction and trading, including LSTM and GRU models \citep{li2022novel,zhang2020energy,hung2021dpp}, econometric regression \citep{wang2024predictability,angelidis2025predicting}, reinforcement learning \citep{gabrielsson2015high,massahi2024deep,kaur2025hybrid}, clustering-based approaches \citep{fengqian2020adaptive}, and AI-agent systems incorporating vision-language models \citep{wang2025agricultural}. 
These studies span agricultural futures \citep{li2022novel,liu2023nonlinear,wang2025agricultural,fengqian2020adaptive}, energy futures \citep{zhang2020energy,sun2019predicting,hu2025graph,barunik2016forecasting}, metal futures \citep{massahi2024deep,kaur2025hybrid,fengqian2020adaptive,sun2019predicting,tan2024futures}, equity index futures \citep{hung2021dpp,gabrielsson2015high,fengqian2020adaptive}, and broad cross-market commodity universes \citep{wang2024predictability,angelidis2025predicting}. 
Many learning-based studies focus solely on price or return prediction \citep{li2022novel,zhang2020energy,hung2021dpp}, while trading-oriented work considers single-contract directional trading \citep{gabrielsson2015high,fengqian2020adaptive,wang2025agricultural,massahi2024deep,kaur2025hybrid,sun2019predicting}, top-K trading \citep{hu2025graph}, cross-commodity long-short trading \citep{wang2024predictability,angelidis2025predicting}, or cross-commodity pairs trading \citep{liu2023nonlinear}. 
However, none of the prior learning-based work addresses CS strategies. 
We conjecture that a key reason is the lack of clear theoretical guarantees on the efficacy of CS strategies, despite their widespread empirical use in futures markets  \citep{clarke2013fundamentals,szymanowska2014anatomy,boons2019basis,CMEFuturesSpreads}.
Accordingly, we provide theoretical justification and demonstrate that CS trading based on hierarchical graph learning is effective compared with benchmark methods.

Existing learning-based studies largely overlook TTM-dependent interrelationships across commodity futures. 
Futures contracts across different commodities are economically interconnected through production chains \citep{chng2009economic}, substitution and complementarity effects \citep{vacha2013time}, and shared macroeconomic drivers \citep{cai2020co,furlong1996commodity,hammoudeh2009relationships,boakye2024commodity,liu2023nonlinear} (e.g., oil and agricultural markets \citep{ji2012does,mitchell2008note,nazlioglu2011world}), yet most learning-based approaches treat contracts in isolation.  
Although a few recent studies \citep{hu2025graph,tan2024futures} incorporate cross-futures relationships through graph learning, they treat commodity futures contracts as generic nodes and construct connections without conditioning on TTM.  
However, interrelationships across commodity futures are inherently TTM-dependent, as contracts with different TTMs exhibit heterogeneous sensitivities to inventory conditions, short-term supply disruptions, production constraints, and long-run demand and substitution dynamics \citep{brennan1976supply,gibson1990stochastic,schwartz1997stochastic,casassus2005stochastic}. 
Ignoring TTM thus conflates distinct economic channels and limits the ability of graph architectures to capture TTM-dependent relationships across commodity futures. 
Their methods \citep{hu2025graph,tan2024futures} also cannot address node birth and death, as their focus on high-frequency trading assumes a fixed contract universe. 
In contrast, we consider daily trading, where node birth and death arise naturally from contract maturities. 
Through our hierarchical graph learning method, we highlight the importance of TTM dependence in commodity futures interrelationships and demonstrate the instrumental role of such interrelationships.

\section{Notation and Preliminaries}
\label{subsec:notation}

For a finite set $\mathcal{I}$ and vectors $\mathbf{x}_i \in \mathbb{R}^p$ for $i \in \mathcal{I}$, let $[\mathbf{x}_i]_{i \in \mathcal{I}} \in \mathbb{R}^{|\mathcal{I}| \times p}$ denote the row-wise stacked matrix in lexicographical order of $\mathcal{I}$.
For an index set $\mathcal{I}$, if $x_i \in \mathbb{R}$ is defined for
$i \in \mathcal{I}' \subset \mathcal{I}$ and not defined for
$i \in \mathcal{I} \setminus \mathcal{I}'$, we let
$\mathrm{sum}(x_i; i \in \mathcal{I})$, $\mathrm{avg}(x_i; i \in \mathcal{I})$ and
$\mathrm{std}(x_i; i \in \mathcal{I})$ denote the sum, average, and standard deviation over $\mathcal{I}'$, respectively. 
We denote by $\mathbb{I}_{\{\cdot\}}$ the indicator function. 
For $a \leq b$, let $[a:b]$, $[a:b)$, $(a:b]$, and $(a:b)$ denote the corresponding integer intervals. 
For $\mathbf{x} =[x_i]_{i\in[1:n]}\in\mathbb{R}^n$, we define $\operatorname{cdf}(x;\mathbf{x})=\frac{1}{n}\sum_{i=1}^n\mathbb{I}_{\{x_i \leq x\}}$ and $\widehat{\operatorname{cdf}}(x;\mathbf{x})=\frac{1}{n+\epsilon}\sum_{i=1}^n\mathbb{I}_{\{x_i \leq x\}}$  where $\epsilon > 0$ is a small constant.  
For any random variable $X$, let $\tilde X$ denote a realization and $\hat X$ denote its mean estimate. 
Let $\mathbf{1}$ denote a vector of ones.
For $x \in \mathbb{R}$, define $[x]^+ = \max\{0,x\}$.
We let $W_\cdot$ and $\mathbf b_\cdot$ denote a real-valued learnable parameter matrix and a real-valued learnable bias vector, respectively, where the subscripts indicate their indices, and let $\phi$ denote an activation function.

We index trading dates by $t \in \mathbb{N}$ in chronological order, without skipped integers. 
Let $\mathcal{C} \subset \mathbb{N}$ denote the index set of commodities, with $\max \mathcal{C} = |\mathcal{C}|$.
For each $c \in \mathcal{C}$, let $\mathcal{D}_{c}\subset \mathbb{N}$ denote the index set of futures contracts with underlying commodity $c$, ordered by increasing maturity, and satisfying $\max \mathcal{D}_c = |\mathcal{D}_c|$.
We denote a futures contract by $(c,d) \in \mathcal{C} \times \mathcal{D}_c$  with maturity $T_{cd}$. 
We denote by $\mathcal{T}_c=\{T_{cd}:d\in\mathcal{D}_{c}\}$ the maturity grid of commodity $c\in\mathcal{C}$. 
We refer to $T_{cd}-t$ as the TTM of contract $(c,d)$ at time $t\in\mathbb{N}$.

For $t\in\mathbb{N}$ and $c\in\mathcal{C}$, we let $S_{tc}>0$ denote the price of commodity $c$ at time $t$, with return $R_{tc} = S_{tc}/S_{t-1,c} - 1$. 
We assume that $S_{tc}$ is defined for all $t\in\mathbb{N}$ and $c\in\mathcal{C}$.  
For $t\in\mathbb{N}$ and $(c,d)\in\mathcal{C}\times\mathcal{D}_c$, let $V_{tcd} \geq 0$ denote the trading volume of futures contract $(c,d)$ at time $t$, and let $F_{tcd}>0$ denote its price at time $t$ for all $t \leq T_{cd}$. 
Assume that the margin ratio is one for every position (long or short) across all contracts and time, and that there are no transaction costs.
The return of futures contract $(c,d)$ is then defined as $R_{tcd}=F_{tcd}/F_{t-1,cd}-1$ when $t \leq T_{cd}$.  
We assume that all $S_{tc}$, $R_{tc}$, $V_{tcd}$, $F_{tcd}$, and $R_{tcd}$ are random variables, and that, for all $(c,d)\in\mathcal{C}\times \mathcal{D}_c$ and $t\leq T_{cd}$, 
\begin{equation}
 \mathbb{E} |R_{tcd}| < \infty,~
 \mathbb{E} R_{tcd}^2 < \infty,~
 \text{and}~
 \operatorname{Var}(R_{tcd})>0. 
 \label{eq:finite_return}
\end{equation}
We define the history at time $t$ as $\mathcal{H}_t=\{(t',\tilde H_{t'}): t' \in [1:t] \}$, where $\tilde H_{t'}=\{(c,d, \tilde V_{t'cd},  \tilde F_{t'cd}):(c,d)\in\mathcal{C}\times\mathcal{D}_c,  \tilde V_{t'cd}>0\}$ is an observable sample at $t'$.

The price of a futures contract $(c,d) \in \mathcal{C} \times \mathcal{D}_c$ at time $t$ is expressed as
\begin{equation}
    F_{tcd} = S_{tc}\exp (Q_{tcd}(T_{cd}-t)) >0 \label{eq:futures_price}
\end{equation}
where the random variable $Q_{tcd} \in \mathbb{R}$ is the net cost-of-carry rate. 
% , and $T_{cd}-t$ represents the TTM. 
Its return is then expressed as
\begin{equation} \label{eq:ret_futures}
   R_{tcd}=(1+R_{tc}) \exp(A_{tcd})-1   
\end{equation}
where $A_{tcd}=Q_{tcd}(T_{cd}-t)-Q_{t-1,cd}(T_{cd}-t+1)$ represents the change in the carry cost from $t-1$ to $t$. 
The spot delta of $(c,d)$ is
\begin{equation}
 \Delta_{tcd}=\frac{\partial F_{tcd}}{\partial S_{tc}}
=\exp(Q_{tcd}(T_{cd}-t))>0, \label{eq:delta}   
\end{equation}
which represents the risk exposure to the underlying commodity price \citep{hull2013fundamentals}.

We define the set of contracts observable from $t-1$ to $t$ as $U_t=\{(c,d)\in\mathcal{C}\times\mathcal{D}_c: V_{tcd} V_{t-1,cd}>0\}$ with $n_t=|U_t|$. 
Let $C_t=\{c:(c,d)\in U_t\}$ and $D_{tc}=\{d:(c,d)\in U_t\}$ denote the corresponding sets of commodities and contracts, respectively, with $n_{tc}=|D_{tc}|$. 
We denote the realization of $U_t$ by $\tilde{U}_t=\{(c,d)\in\mathcal{C}\times\mathcal{D}_c: \tilde V_{tcd} \tilde V_{t-1,cd}>0\}$, and define $\tilde C_t$, $\tilde D_{tc}$, $\tilde n_t$, and $\tilde n_{tc}$, analogously. 
We denote the trading universe at time $t$ by $\hat{U}_t \subset \{(c,d)\in \mathcal{C}\times\mathcal{D}_c: t \leq T_{cd}\}$, and define $\hat{C}_t, \hat{D}_{tc}, \hat{n}_t$, and $\hat{n}_{tc}$ analogously. 
Note that $\hat{U}_t$ is determined at $t-2$. 
In the following, we consider a single investor implementing our method.

\section{Proposed Method} 

To exploit statistical arbitrage in commodity futures markets, we propose a hierarchical graph learning method to predict futures price movements for CS trading. 
We first formally define a CS strategy, establish its efficacy through theoretical propositions, and introduce a projection method that maps predictions into CS positions.
We then formulate the prediction problem for CS trading and develop a hierarchical graph learning architecture that captures TTM-dependent interrelationships among commodity futures.

\subsection{Calendar Spread Strategy}
\label{subsec:cs}
For $t\in\mathbb{N}$ and $c\in \hat C_t$, we define a CS strategy for $c$ as a trading strategy that yields a position weight vector $\mathbf{w}_{tc}=[w_{tcd}]_{d\in \hat D_{tc}} \in\mathbb{R}^{\hat n_{tc}}$ such that $ \mathbf 1^\top \mathbf{w}_{tc}=0$ and $\|\mathbf{w}_{tc}\|_1=1$. 
This implies symmetric long and short positions: 
\begin{equation}
 \sum_{d\in \hat D_{tc}}[w_{tcd}]^+=\sum_{d\in \hat D_{tc}}[-w_{tcd}]^+=\frac{1}{2} .\label{eq:symmetric_weight}
\end{equation}
We let $\mathcal W_{tc,\mathrm{CS}}=\{\mathbf{w}_{tc}\in\mathbb R^{\hat n_{tc}}: \mathbf 1^\top \mathbf{w}_{tc}=0,\ \|\mathbf{w}_{tc}\|_1=1\}$ denote the set of CS weights for $c$ at $t$, and $\mathcal W_{tc,\mathrm{LO}}=\{\mathbf{w}_{tc}\in\mathbb R^{\hat n_{tc}}: w_{tcd} \ge 0,\ \|\mathbf{w}_{tc}\|_1=1\}$ denote the set of LO weights for $c$ at $t$. 
Since any position can be expressed as a linear combination of CS and LO positions, we focus on these two classes.

We evaluate both classes of strategies in terms of their risk and risk-adjusted return.  
For $t \in \mathbb N$ and $c \in \hat C_t$, a position weight vector $\mathbf{w}_{tc}=[w_{tcd}]_{d\in \hat D_{tc}} \in\mathbb{R}^{\hat n_{tc}}$ is determined by an investor at $t-2$, built in markets at $t-1$. 
Then, the resulting position is cleared at $t$, yielding the return given by 
\begin{equation}\label{eq:Gaam_def}
 \Pi(\mathbf{w}_{tc})=\sum_{d\in \hat D_{tc}} w_{tcd} R_{tcd} %=\sum_{d\in D_{tc}} w_{tcd} ((1+R_{tc})e^{A_{tcd}}-1).    
\end{equation}
Note that $\Pi(\mathbf{w}_{tc})$ depends on $R_{tcd}$; for simplicity, we omit this dependence. 
Moreover, we define the return variance of $\mathbf{w}_{tc}$ as $\operatorname{Var}(\Pi(\mathbf{w}_{tc}))$, the IR of $\mathbf{w}_{tc}$ as $\operatorname{IR}(\Pi(\mathbf{w}_{tc}))=\mathbb{E}(\Pi(\mathbf{w}_{tc}))/\sqrt{\operatorname{Var}(\Pi(\mathbf{w}_{tc}))}$ if $\operatorname{Var}(\Pi(\mathbf{w}_{tc}))>0$, 
and the delta of $\mathbf{w}_{tc}$ as
\begin{equation}
 \Delta(\mathbf{w}_{tc})=\tfrac{\partial}{\partial S_{t-1,c}}(\sum_{d\in \hat {D}_{tc}} w_{tcd} F_{t-1,cd}) \allowbreak =\sum_{d\in \hat D_{tc}} w_{tcd} \Delta_{t-1,cd}.   \label{eq:def_port_delta}
\end{equation}
Risk is commonly measured by return variance \citep{markowitz1952portfolio}  or by delta \citep{hull2013fundamentals}, and investment objectives are often framed as maximizing risk-adjusted return \citep{sharpe1966mutual,sharpe1998sharpe}. 
Thus, we analyze LO and CS strategies under these metrics in \Cref{prop:var,prop:ir,prop:low_delta}, with proofs in Appendix \Cref{appendix:proofs}.

\Cref{prop:var} implies that CS strategies can have lower variance than LO strategies, thereby entailing lower risk. 
Since the variance of $R_{tcd}$ is driven more by the underlying commodity return $R_{tc}$ than by the cost-of-carry change $A_{tcd}$ in \cref{eq:ret_futures} \citep{gospodinov2013commodity,maslyuk2009cointegration}, returns of contracts with the same underlying commodity tend to be positively correlated. 
Based on this property, we assume that such correlations admit nonnegative lower bounds in \Cref{prop:var}. 
Part (i) shows that an LO position has higher variance than all CS positions when correlations exceed a threshold depending on the LO position. 
Part (ii) removes this dependence and shows that all CS positions have lower variance than all LO positions when correlations exceed a bound determined by the variance range ratio and the number of contracts.

\begin{proposition} \label{prop:var} 
Fix $t\in\mathbb{N}$ and $c\in \hat C_t$. 
For $d,d'\in \hat D_{tc}$, let $\sigma_{tcdd'}=\operatorname{Cov}(R_{tcd},R_{tcd'})$, $\rho_{tcdd'}=\operatorname{Corr}(R_{tcd},R_{tcd'})=\sigma_{tcdd'}/\sqrt{\sigma_{tcdd}\sigma_{tcdd'}}$, and 
$\kappa_{tc}^2=(\max_{d\in \hat D_{tc}}\sigma_{tcdd})/(\min_{d\in \hat D_{tc}}\sigma_{tcdd})$. \\
\textnormal{(i)} If $\min_{d,d'\in \hat D_{tc}}\rho_{tcdd'} \geq 0$ and $\mathbf{w}\in\mathcal{W}_{tc,\mathrm{LO}}(\hat D_{tc})$ satisfies 
\begin{equation}
\min_{d,d'\in \hat D_{tc}}\rho_{tcdd'} > ({\kappa_{tc}^2-2\norm{\mathbf{w}}_2^2})({3-2\norm{\mathbf{w}}_2^2})^{-1},    \label{eq:prop-var-cond1}
\end{equation}
then for all $\mathbf{w}'\in \mathcal{W}_{tc,\mathrm{CS}}(\hat D_{tc})$, $\operatorname{Var}(\Pi(\mathbf{w})) > \operatorname{Var}(\Pi(\mathbf{w}'))$.\\
\textnormal{(ii)} 
If $\min_{d,d'\in \hat D_{tc}}\rho_{tcdd'} \geq 0$ and 
\begin{equation}
\min_{d,d'\in \hat D_{tc}}\rho_{tcdd'} > (\kappa_{tc}^2 \hat n_{tc} - 2)(3 \hat n_{tc}-2)^{-1},
\label{eq:prop-var-cond2}
\end{equation} 
then for all $(\mathbf{w},\mathbf{w}') \in \mathcal{W}_{tc,\mathrm{LO}}(\hat D_{tc})\times\mathcal{W}_{tc,\mathrm{CS}}(\hat D_{tc})$, $\operatorname{Var}(\Pi(\mathbf{w})) > \operatorname{Var}(\Pi(\mathbf{w}'))$. 
\end{proposition}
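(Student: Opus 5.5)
The plan is to sandwich both variances between bounds expressed only in $s_{\min}=\min_d\sigma_{tcdd}$, $s_{\max}=\max_d\sigma_{tcdd}$, $\rho:=\min_{d,d'}\rho_{tcdd'}\in[0,1]$ and $\norm{\mathbf w}_2^2$. Then I will check that \cref{eq:prop-var-cond1} is exactly the condition that the lower bound for the LO position exceeds the upper bound for every CS position. Throughout I use $\operatorname{Var}(\Pi(\mathbf w))=\sum_{d,d'}w_dw_{d'}\sigma_{tcdd'}$ and $\sigma_{tcdd'}=\rho_{tcdd'}\sqrt{\sigma_{tcdd}\sigma_{tcd'd'}}$. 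These are finite by \cref{eq:finite_return}.

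\emph{Lower bound for LO.} Let $\mathbf w\in\mathcal W_{tc,\mathrm{LO}}$ and write $q=\norm{\mathbf w}_2^2$. All products $w_dw_{d'}$ are nonnegative. Each diagonal term is at least $w_d^2s_{\min}$, and each off-diagonal covariance is at least $\rho s_{\min}$, using $\rho\ge 0$. Since $\sum_{d\neq d'}w_dw_{d'}=(\mathbf 1^\top\mathbf w)^2-q=1-q$, this gives
\[
\operatorname{Var}(\Pi(\mathbf w))\ \ge\ s_{\min}\bigl(q+\rho(1-q)\bigr).
\]

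\emph{Upper bound for CS.} Let $\mathbf w'\in\mathcal W_{tc,\mathrm{CS}}$, and let $P$ and $N$ be the index sets of its positive and negative entries. By \cref{eq:symmetric_weight}, $\sum_P w'_d=\tfrac12$ and $\sum_N w'_d=-\tfrac12$. I split the double sum into three groups.
\begin{itemize}
\item Diagonal terms: these are at most $s_{\max}\norm{\mathbf w'}_2^2$.
\item Same-sign off-diagonal pairs: here $w'_dw'_{d'}>0$ and $\sigma_{tcdd'}\le s_{\max}$, because correlations are at most $1$. Their total weight is $\tfrac14-\sum_Pw_d'^2+\tfrac14-\sum_Nw_d'^2=\tfrac12-\norm{\mathbf w'}_2^2$.
\item Opposite-sign pairs: here $w'_dw'_{d'}<0$ and $\sigma_{tcdd'}\ge\rho s_{\min}$. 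The total $\sum|w'_dw'_{d'}|$ over ordered such pairs is $2\cdot\tfrac12\cdot\tfrac12=\tfrac12$.
\end{itemize}
Summing the three groups, the $\norm{\mathbf w'}_2^2$ terms cancel and
\[
\operatorname{Var}(\Pi(\mathbf w'))\ \le\ \tfrac12 s_{\max}-\tfrac12\rho s_{\min}.
\]

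\emph{Part (i).} It now suffices to have $q+\rho(1-q)>\tfrac12\kappa_{tc}^2-\tfrac12\rho$, after dividing both bounds by $s_{\min}>0$. Rearranged, this reads $\rho(3-2q)>\kappa_{tc}^2-2q$. Since $q\le 1$, we have $3-2q>0$, so this is precisely \cref{eq:prop-var-cond1}, and the strict inequality carries over.

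\emph{Part (ii).} Here I would not optimize the threshold in (i) over $q$, because its monotonicity in $q$ depends on whether $\kappa_{tc}^2\gtrless 3$. Instead I use the lower bound directly. The expression $q+\rho(1-q)=\rho+q(1-\rho)$ is nondecreasing in $q$ because $\rho\le1$. Moreover $q\ge 1/\hat n_{tc}$ for every LO weight, by Cauchy--Schwarz. Hence $\operatorname{Var}(\Pi(\mathbf w))\ge s_{\min}\bigl(\rho+(1-\rho)/\hat n_{tc}\bigr)$ uniformly over LO weights. Comparing with the CS bound gives the condition $\rho(3-2/\hat n_{tc})>\kappa_{tc}^2-2/\hat n_{tc}$. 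Multiplying by $\hat n_{tc}$, this is \cref{eq:prop-var-cond2}.

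The main obstacle is the CS upper bound. It needs the sign-partition bookkeeping, the fact that the same-sign off-diagonal weight equals $\tfrac12-\norm{\mathbf w'}_2^2$ so that it cancels the diagonal exactly, and the use of both $\rho_{tcdd'}\le 1$ and $\rho_{tcdd'}\ge\rho\ge0$ in the right places.
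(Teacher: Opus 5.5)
Your proposal is correct and follows the paper's proof. It uses the same LO lower bound $\sigma_{\min}^2\bigl((1-\rho^*)\|\mathbf{w}\|_2^2+\rho^*\bigr)$ and the same CS upper bound $\tfrac12(\sigma_{\max}^2-\rho^*\sigma_{\min}^2)$; the paper obtains the latter from the positive/negative-part decomposition $w'_d=[w'_d]^+-[-w'_d]^+$ rather than your diagonal/same-sign cancellation, and the rearrangement for (i) is identical. For (ii), the paper first derives $\kappa_{tc}^2<3$ from the hypothesis, so that the threshold in (i) is maximized at $\|\mathbf{w}\|_2^2=1/\hat n_{tc}$, and then invokes (i); you instead use that the lower bound is nondecreasing in $q$, which needs only $\rho\le 1$ and is an equivalent, slightly more direct shortcut.
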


\Cref{prop:ir} shows that if the expected return of an LO strategy $\mathbf{w}$ exceeds that of the uninformed equal-weight strategy $\bar{\mathbf{w}}$ by a threshold, then the CS strategy $\check{\mathbf{w}}$ derived from $\mathbf{w}$ attains \red{a positive IR that exceeds that of the LO strategy.}  
The CS weight vector $\check{\mathbf{w}}$ is obtained by demeaning and rescaling $\mathbf{w}$, thereby preserving the relative ordering of weights while satisfying the CS constraints. 
Because $\operatorname{Var}(\Pi(\check{\mathbf{w}}))\ge 0$ in part (ii), the first inequality in each part ensures that $\Pi(\mathbf{w})$ has nonzero variance. 
Since $\mathbf{w}\geq 0$, this nonzero-variance condition typically holds when contracts with the same underlying commodity are positively correlated, as is commonly observed in the market (see \Cref{prop:var}). 
Moreover, the first inequality in part (ii) is satisfied under \Cref{prop:var}, for instance.  
The second inequality in parts (i) and (ii) excludes the degenerate case addressed in part (iii). 
Each part specifies a threshold on $\mathbb{E}\Pi(\mathbf{w})-\mathbb{E}\Pi(\bar{\mathbf{w}})$ under different conditions: the variance ratio in (i), the position difference in (ii), and zero in (iii). 
In part (iii), $\check{\mathbf{w}}$ yields a risk-free arbitrage, which can be viewed as having an infinite IR, and in parts (i) and (ii), the CS strategy achieves \red{a positive IR that exceeds that of the LO strategy.} 
Therefore, if an LO strategy exceeds the uninformed benchmark beyond the corresponding threshold, investing in the LO position is inferior to the derived CS position. 
Since CS positions are dollar-neutral, their Sharpe and information ratios coincide, whereas the Sharpe benchmark for LO positions is the risk-free rate \citep{sharpe1966mutual,sharpe1998sharpe}. 
Thus, \Cref{prop:ir} also implies that CS strategies can attain higher Sharpe ratios than LO strategies.

% Fix $t\in\mathbb{N}$,  $c\in\mathcal{C}$, and $D'_{tc} \subseteq D_{tc}$ with $n'_{tc}=|D'_{tc}|> 1$. 

\begin{proposition} \label{prop:ir}
Fix $t\in\mathbb{N}$ and $c\in \hat C_t$. 
Let $\mathbf{w}\in\mathcal{W}_{tc,\mathrm{LO}}$, $\bar{\mathbf{w}}=\mathbf 1/\hat n_{tc} \in\mathcal{W}_{tc,\mathrm{LO}}$, and $\check{\mathbf{w}}=(\mathbf{w}-\bar{\mathbf{w}})/\norm{\mathbf{w}-\bar{\mathbf{w}}}_1\in\mathcal{W}_{tc,\mathrm{CS}}$ with $\mathbf{w}\neq \bar{\mathbf{w}}$. 
\\
\textnormal{(i)} If $\operatorname{Var}(\Pi(\mathbf{w})) > 0$, $\operatorname{Var}(\Pi(\mathbf{w}-\bar{\mathbf{w}})) > 0$, and
\begin{equation}\label{eq:prop-ir-cond1}
\mathbb{E}\Pi(\mathbf{w})-\mathbb{E}\Pi(\bar{\mathbf{w}})
>
\red{[\mathbb{E}\Pi(\mathbf{w})]^+} \sqrt{\operatorname{Var}(\Pi(\mathbf{w}-\bar{\mathbf{w}}))/\operatorname{Var}(\Pi(\mathbf{w}))}, 
\end{equation} 
then $\operatorname{IR}(\Pi(\check{\mathbf{w}}))>\red{[\operatorname{IR}(\Pi(\mathbf{w}))]^+}$. \\
\textnormal{(ii)} 
If $\operatorname{Var}(\Pi(\mathbf{w}))
>\operatorname{Var}(\Pi(\check{\mathbf{w}}))$, 
$\operatorname{Var}(\Pi(\mathbf{w}-\bar{\mathbf{w}})) > 0$, 
and
\begin{equation} \label{eq:prop-ir-cond2}
\mathbb{E}\Pi(\mathbf{w})-\mathbb{E}\Pi(\bar{\mathbf{w}})
>
\norm{\mathbf{w}-\bar{\mathbf{w}}}_1 \left[\mathbb{E}\Pi(\mathbf{w})\right]^+,  
\end{equation}
then $\operatorname{IR}(\Pi(\check{\mathbf{w}}))>\red{[\operatorname{IR}(\Pi(\mathbf{w}))]^+}$. \\
\textnormal{(iii)} If $\operatorname{Var}(\Pi(\mathbf{w})) > 0$, $\operatorname{Var}(\Pi(\mathbf{w}-\bar{\mathbf{w}})) = 0$, and
\begin{equation} \label{eq:prop-ir-cond0}
\mathbb{E}\Pi(\mathbf{w})-\mathbb{E}\Pi(\bar{\mathbf{w}}) > 0,
\end{equation}
then $\mathbb{E}\Pi(\check{\mathbf w}) > 0$, $\operatorname{Var}(\Pi(\check{\mathbf w}))=0$ and $\operatorname{IR}(\Pi(\mathbf w))<\infty$. 
\end{proposition}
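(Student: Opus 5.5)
The whole argument rests on linearity of $\Pi$ in the weights. Write $\mathbf{v}=\mathbf{w}-\bar{\mathbf{w}}$ and $s=\norm{\mathbf{v}}_1>0$, so that $\check{\mathbf{w}}=\mathbf{v}/s$. Linearity gives
\begin{equation*}
\mathbb{E}\Pi(\check{\mathbf{w}})=\frac{\mathbb{E}\Pi(\mathbf{w})-\mathbb{E}\Pi(\bar{\mathbf{w}})}{s},\qquad
\operatorname{Var}(\Pi(\check{\mathbf{w}}))=\frac{\operatorname{Var}(\Pi(\mathbf{v}))}{s^2}.
\end{equation*}
Hence, whenever $\operatorname{Var}(\Pi(\mathbf{v}))>0$,
\begin{equation*}
\operatorname{IR}(\Pi(\check{\mathbf{w}}))=\frac{\mathbb{E}\Pi(\mathbf{w})-\mathbb{E}\Pi(\bar{\mathbf{w}})}{\sqrt{\operatorname{Var}(\Pi(\mathbf{v}))}}.
\end{equation*}
This expression is invariant to the scaling $s$, which is why condition \eqref{eq:prop-ir-cond1} involves no $\ell_1$ norm. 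I would first record these identities and that $\check{\mathbf{w}}\in\mathcal{W}_{tc,\mathrm{CS}}$, which holds because $\mathbf 1^\top\mathbf{v}=0$.

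For part (i), divide \eqref{eq:prop-ir-cond1} by $\sqrt{\operatorname{Var}(\Pi(\mathbf{v}))}>0$. This gives
\begin{equation*}
\operatorname{IR}(\Pi(\check{\mathbf{w}}))>\frac{[\mathbb{E}\Pi(\mathbf{w})]^+}{\sqrt{\operatorname{Var}(\Pi(\mathbf{w}))}}=[\operatorname{IR}(\Pi(\mathbf{w}))]^+,
\end{equation*}
where the equality uses that the map $x\mapsto x/\sigma$ commutes with $[\cdot]^+$ for $\sigma>0$. That completes (i).

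For part (ii), use the hypothesis $\operatorname{Var}(\Pi(\mathbf{w}))>\operatorname{Var}(\Pi(\check{\mathbf{w}}))=\operatorname{Var}(\Pi(\mathbf{v}))/s^2$. This gives
\begin{equation*}
\sqrt{\operatorname{Var}(\Pi(\mathbf{v}))/\operatorname{Var}(\Pi(\mathbf{w}))}<s,
\end{equation*}
and $\operatorname{Var}(\Pi(\mathbf{w}))>0$ follows automatically. Multiplying by $[\mathbb{E}\Pi(\mathbf{w})]^+\ge0$ yields
\begin{equation*}
s\,[\mathbb{E}\Pi(\mathbf{w})]^+\ge[\mathbb{E}\Pi(\mathbf{w})]^+\sqrt{\operatorname{Var}(\Pi(\mathbf{v}))/\operatorname{Var}(\Pi(\mathbf{w}))}.
\end{equation*}
Equality holds when the positive part vanishes. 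Combined with \eqref{eq:prop-ir-cond2}, this gives \eqref{eq:prop-ir-cond1}, so (ii) reduces to (i).

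The one point needing care is the strictness of the inequalities. When $[\mathbb{E}\Pi(\mathbf{w})]^+=0$, the chain yields only a weak intermediate inequality. The strict inequality in \eqref{eq:prop-ir-cond2} must then carry the argument: it states $\mathbb{E}\Pi(\mathbf{w})-\mathbb{E}\Pi(\bar{\mathbf{w}})>0$, which is exactly the strict form of \eqref{eq:prop-ir-cond1} in that case. I would check that case separately.

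For part (iii), $\operatorname{Var}(\Pi(\check{\mathbf{w}}))=\operatorname{Var}(\Pi(\mathbf{v}))/s^2=0$, and $\mathbb{E}\Pi(\check{\mathbf{w}})>0$ follows from \eqref{eq:prop-ir-cond0} divided by $s$. Finally, $\operatorname{IR}(\Pi(\mathbf{w}))$ is well defined and finite. Its denominator is positive by hypothesis, and its numerator is finite because $\mathbb{E}|R_{tcd}|<\infty$ by \eqref{eq:finite_return} and $\Pi(\mathbf{w})$ is a finite combination of the $R_{tcd}$.

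None of the steps is a serious obstacle. The main risk is bookkeeping around the positive part and the strict versus weak inequalities in (ii).
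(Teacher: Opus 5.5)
Your proof is correct and is essentially the paper's. The same linearity identities give $\operatorname{IR}(\Pi(\check{\mathbf w}))=\operatorname{IR}(\Pi(\mathbf w-\bar{\mathbf w}))$, part (i) is the same division by $\sqrt{\operatorname{Var}(\Pi(\mathbf w-\bar{\mathbf w}))}$, and part (ii) uses the same two inequalities: you route them through the hypothesis of (i), while the paper chains them directly on the IR. Your separate check of the case $[\mathbb{E}\Pi(\mathbf w)]^+=0$ is unnecessary, since a strict inequality followed by a weak one is already strict. In (iii), bounding the numerator via $\mathbb{E}|R_{tcd}|<\infty$ is the right justification, and it is sounder than the paper's step $\mathbb{E}\Pi(\mathbf w)<\sum_d\mathbb{E}R_{tcd}$, which fails for instance when some $\mathbb{E}R_{tcd}<0$.
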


\Cref{prop:low_delta} implies that CS strategies can have lower risk exposure to the underlying commodity price. 
Part (i) gives a condition under which the CS weight vector $\check{\mathbf{w}}$, obtained by demeaning an LO vector $\mathbf{w}$ and scaling, has lower delta. 
In particular, if $\norm{\mathbf{w}-\bar{\mathbf{w}}}_1 \geq 1$, then $\check{\mathbf{w}}$ always has lower risk exposure regardless of $\Delta(\mathbf{w})$ and $\Delta(\bar{\mathbf{w}})$, as both are positive by \cref{eq:delta}. 
Under the assumption that the delta range ratio, $\Delta_{t-1,c}^{\max}/\Delta_{t-1,c}^{\min}$, is less than three, 
part (ii) implies that, almost surely, every CS strategy is less exposed to the underlying commodity price than every LO strategy.

\begin{proposition}\label{prop:low_delta}
Fix $t\in\mathbb{N}$ and $c \in \hat C$.  \\
\textnormal{(i)} 
Let $\mathbf{w}\in\mathcal{W}_{tc,\mathrm{LO}}$, 
$\bar{\mathbf{w}}=\mathbf 1/\hat n_{tc} \in\mathcal{W}_{tc,\mathrm{LO}}$, and $\check{\mathbf{w}}=(\mathbf{w}-\bar{\mathbf{w}})/\norm{\mathbf{w}-\bar{\mathbf{w}}}_1\in\mathcal{W}_{tc,\mathrm{CS}}$ with $\mathbf{w}\neq \bar{\mathbf{w}}$.  
If $\Delta(\bar{\mathbf{w}})>(1-\norm{\mathbf{w}-\bar{\mathbf{w}}}_1)\Delta(\mathbf{w})$, then $\Delta(\check{\mathbf{w}})<\Delta(\mathbf{w})$. \\
\textnormal{(ii)} Assume that $\Delta_{t-1,cd}$ is an a.s. finite random variable for every~$d\in \hat D_{tc}$.
Let $\Delta_{t-1,c}^{\min}=\min\{\operatorname*{ess\,inf}\Delta_{t-1,cd}: d\in \hat D_{tc}\}\in\mathbb{R}$ and
$\Delta_{t-1,c}^{\max}=\max\{\operatorname*{ess\,sup}\Delta_{t-1,cd}: d\in \hat D_{tc}\}\in\mathbb{R}$.
If $3\Delta_{t-1,c}^{\min}>\Delta_{t-1,c}^{\max}$, then 
\begin{equation}
 \operatorname*{ess\,inf}_{\mathbf{w}\in\mathcal{W}_{tc,\mathrm{LO}}}\Delta(\mathbf{w})
 >
 \operatorname*{ess\,sup}_{\mathbf{w}\in\mathcal{W}_{\mathrm{tc,CS}}}\Delta(\mathbf{w})
\end{equation}
\end{proposition}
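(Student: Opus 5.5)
Both parts follow from linearity of the portfolio delta in \cref{eq:def_port_delta}, $\Delta(\mathbf{w})=\sum_{d}w_{tcd}\Delta_{t-1,cd}$, together with positivity of the contract deltas from \cref{eq:delta}.

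\textbf{Part (i).} Set $s=\norm{\mathbf{w}-\bar{\mathbf{w}}}_1>0$. By linearity,
\begin{equation*}
\Delta(\check{\mathbf{w}})=\frac{\Delta(\mathbf{w})-\Delta(\bar{\mathbf{w}})}{s}.
\end{equation*}
Because $s>0$, the inequality $\Delta(\check{\mathbf{w}})<\Delta(\mathbf{w})$ is equivalent to $\Delta(\mathbf{w})-\Delta(\bar{\mathbf{w}})<s\,\Delta(\mathbf{w})$. Rearranging gives $\Delta(\bar{\mathbf{w}})>(1-s)\Delta(\mathbf{w})$, which is exactly the hypothesis. Part (i) is therefore a one-line rearrangement.

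\textbf{Part (ii), plan.} I will bound the two sides pointwise, almost surely, and then pass to the essential infimum and supremum.

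\emph{LO side.} For $\mathbf{w}\in\mathcal{W}_{tc,\mathrm{LO}}$ the weights are nonnegative and sum to one. Hence $\Delta(\mathbf{w})$ is a convex combination of the $\Delta_{t-1,cd}$, so $\Delta(\mathbf{w})\ge \min_d\Delta_{t-1,cd}\ge\Delta_{t-1,c}^{\min}$ a.s.

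\emph{CS side.} For $\mathbf{w}\in\mathcal{W}_{tc,\mathrm{CS}}$, use \cref{eq:symmetric_weight}: the positive parts and the negative parts of the weights each sum to $1/2$. The positive parts therefore contribute at most $\tfrac12\Delta_{t-1,c}^{\max}$. The negative parts subtract at least $\tfrac12\Delta_{t-1,c}^{\min}$, since $\Delta_{t-1,cd}\ge\Delta_{t-1,c}^{\min}$ a.s. This gives
\begin{equation*}
\Delta(\mathbf{w})\le \tfrac12\bigl(\Delta_{t-1,c}^{\max}-\Delta_{t-1,c}^{\min}\bigr)\quad\text{a.s.}
\end{equation*}

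\emph{Comparison.} The required strict inequality $\Delta_{t-1,c}^{\min}>\tfrac12(\Delta_{t-1,c}^{\max}-\Delta_{t-1,c}^{\min})$ is equivalent to $3\Delta_{t-1,c}^{\min}>\Delta_{t-1,c}^{\max}$, which is the hypothesis.

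\textbf{Main obstacle.} The only delicate step is making the essential infimum and supremum over an uncountable family of weights rigorous. I will interpret them as the essential bounds of the random variables $\inf_{\mathbf{w}\in\mathcal{W}_{tc,\mathrm{LO}}}\Delta(\mathbf{w})$ and $\sup_{\mathbf{w}\in\mathcal{W}_{tc,\mathrm{CS}}}\Delta(\mathbf{w})$.

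These are measurable because the index set $\hat D_{tc}$ is finite. The LO infimum equals $\min_d\Delta_{t-1,cd}$, since the minimum is attained at a vertex of the simplex. The CS supremum equals $\tfrac12(\max_d\Delta_{t-1,cd}-\min_d\Delta_{t-1,cd})$, attained by putting weight $\tfrac12$ on an argmax and $-\tfrac12$ on an argmin.

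The intersection of the finitely many null sets on which some $\Delta_{t-1,cd}$ leaves $[\Delta_{t-1,c}^{\min},\Delta_{t-1,c}^{\max}]$ still has probability one. On that event the pointwise bounds above hold, so the stated inequality follows.
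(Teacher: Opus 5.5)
Your proposal is correct and matches the paper's proof essentially step for step: part (i) uses the same linearity rearrangement, and part (ii) uses the same a.s.\ bounds $\Delta(\mathbf{w})\ge\Delta^{\min}_{t-1,c}$ for LO weights and $\Delta(\mathbf{w})\le\tfrac12(\Delta^{\max}_{t-1,c}-\Delta^{\min}_{t-1,c})$ for CS weights via \cref{eq:symmetric_weight}. Your extra remark, that the null set can be chosen independently of $\mathbf{w}$ because only finitely many contracts $d$ are involved, makes rigorous a point the paper leaves implicit.
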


We assess the realism of the inequality assumptions in the second parts of \Cref{prop:var,prop:low_delta} by estimating the relevant parameters in the trading universe $\hat U_t$. 
Since $\hat U_t$ is determined based on $\mathcal{H}_{t-2}$, as discussed below in \cref{eq:informative_features}, some contracts in $\hat U_t$ may not be observable at $t-1$ or $t$, i.e., $\tilde V_{t-1,cd} \tilde V_{tcd}=0$. 
Thus, in the propositions, we replace $\hat D_{tc}$, $\hat n_{tc}$, and $\hat C_t$    with $\hat D'_{tc}=\hat D_{tc} \cap \tilde D_{tc}$, $\hat n'_{tc}=|\hat D'_{tc}|$, and $\hat C'_t=\{c\in \hat C_t \cap \tilde C_t: \hat n'_{tc} >1 \}$, respectively.  
For a CS strategy to be implemented, at least two contracts on the same underlying asset are required; thus, we impose the constraint $\hat n'_{tc} > 1$ in $\hat C'_t$. 
For each $t\in\mathbb{N}$, $c\in\hat C'_{t}$, and $(d,d')\in  \hat D'_{t,c}\times \hat D'_{t,c}$, 
we estimate $\hat{\sigma}_{tcdd'}$ as the sample covariance of $\{(\tilde{R}_{t'cd},\tilde{R}_{t'cd'}): t'\in [t-n_\mathrm{sam}^\mathrm{min}:t]\}$, assuming $\hat{\sigma}_{tcdd'}\approx\hat{\sigma}_{t'cdd'}$ over $t'\in [t-n_\mathrm{sam}^\mathrm{min}:t)$. 
Since $\hat{U}_{t}$ and $\tilde U_t$ ensure price data availability on $[t-n_\mathrm{sam}^\mathrm{min}-1:t-2]$---discussed in \cref{eq:univ_past_trades}---and $[t-1:t]$, respectively, we use returns over $[t-n_\mathrm{sam}^\mathrm{min}:t]$ for estimation. 
We then estimate 
$\hat\rho_{tcdd'}=\hat\sigma_{tcdd'}/\sqrt{\hat \sigma_{tcdd} \hat\sigma_{tcdd'}}$, $\min_{d,d'\in \hat D'_{tc}}\hat \rho_{tcdd'}$, and 
$\hat \kappa_{tc}^2=(\max_{d\in \hat D'_{tc}} \hat \sigma_{tcdd})/(\min_{d\in \hat D'_{tc}}\hat \sigma_{tcdd})$. 
For each $t\in\mathbb{N}$ and $c\in\hat C_{t}$, 
we estimate $\hat\Delta_{t-1,c}^\mathrm{max}/\hat\Delta_{t-1,c}^\mathrm{min}=\max\{\tilde{F}_{t-1,cd}/\tilde{F}_{t-1,cd'}:d,d' \in \hat D'_{tc}\}$ from \cref{eq:futures_price,eq:delta}. 
We then check the following inequalities for each $t$ and $c$: 
\begin{align}
&\hat{\rho}_{tc,\min}-(\hat{\kappa}^2 \hat{n}'_{tc} - 2)(3 \hat{n}'_{tc}-2)^{-1}>0, \label{eq:test1}\\
&\hat{\rho}_{tc,\min}\geq 0, \label{eq:test2}\\
&\hat{\Delta}_{tc}^{\max}/\hat{\Delta}_{tc}^{\min} < 3, \label{eq:test3}
\end{align}
where \cref{eq:test1,eq:test2} correspond to (ii) in \Cref{prop:var},  and \cref{eq:test3} corresponds to $3\Delta_{t-1,c}^{\min}>\Delta_{t-1,c}^{\max}$ in \Cref{prop:low_delta}.

Since CS strategies can exhibit lower risk and higher IR yet remain largely unexplored in the machine-learning literature, we focus on CS strategies and extend the framework from a single commodity to multiple commodities.
We define a CS strategy for $\hat U_t$ as a strategy that yields a position weight vector $\mathbf{w}_t=[w_{tcd}]_{(c,d)\in \hat U_{t}} \in \mathbb{R}^{\hat n_t}$ satisfying $\|\mathbf{w}_{t}\|_1=1$ and 
\begin{equation}
  \mathbf 1^\top \mathbf{w}_{tc}=0,~\forall c\in \hat C_t.  
  \label{eq:def_cs_multi}
\end{equation}
Unlike the single-commodity definition, the $\ell_1$-norm constraint is imposed on $\mathbf{w}_t$, rather than on each $\mathbf{w}_{tc}$. 
A CS strategy for $\hat U_t$ is a convex combination of the weight vectors of CS strategies, one for each $c \in \hat C_t$, after extending their dimensions to $\mathbb{R}^{\hat n_t}$. 
 
% \red{$\hat{\mathbf{Y}}'_{t} \leftarrow \hat{\mathbf Y}^\circ$}

% \red{$\hat{\mathbf Y}^\circ \leftarrow  \hat{\mathbf Y}^\circ_t$}

% \red{$\bar{\hat{Y}}_{tc} \leftarrow \hat{\mathbf Y}^\bullet$}

% \red{$\hat{\mathbf Y}^\bullet \leftarrow \hat{Y}^\bullet$}

% \red{$\hat{Y}^\bullet \leftarrow \hat{Y}^\bullet_{tc}$}

Given a prediction vector $\hat{\mathbf{Y}}_{t}=[\hat Y_{tcd}]_{(c,d)\in\hat U_t} \in \mathbb{R}^{\hat{n}_t}$, we construct a CS weight vector $\mathbf{w}_t$ for $\hat U_t$ using a projection method adapted from \cite{hong2025statistical}, where $\hat Y_{tcd}$ is discussed below.
Specifically, we compute 
\begin{equation}
\mathbf{w}_t = \hat{\mathbf Y}^\circ_t/\|\hat{\mathbf Y}^\circ_t\|_1    \label{eq:group-neutralized-position-vector1}
\end{equation}
where 
\begin{align}
 \hat{\mathbf Y}^\circ_t&=[\hat{Y}_{tcd}-\hat{Y}^\bullet_{tc}]_{(c,d)\in\hat{U}_t},    \label{eq:group-neutralized-position-vector2}\\
 \hat{Y}^\bullet_{tc}&=\frac{1}{\hat{n}_{tc}} \sum_{d' \in \hat{D}_{tc}}  \hat{Y}_{tcd'}.  \label{eq:group-neutralized-position-vector3}
\end{align}
That is, we center the prediction vector by subtracting the commodity-wise average $\hat{Y}^\bullet_{tc}$ in \cref{eq:group-neutralized-position-vector2}, and then scale it in \cref{eq:group-neutralized-position-vector1}.
This computation is equivalent to projecting $\hat{\mathbf{Y}}_{t}$ onto $\{\mathbf{w}_t:\mathbf 1^\top \mathbf{w}_{tc}=0,~\forall c\in \hat{C}_t\}$ under the Euclidean norm and then scaling; see Appendix \Cref{appendix:group_neutralization}. 
Thus, $\mathbf{w}_t$ satisfies $\mathbf 1^\top \mathbf{w}_{tc}=0,~\forall c\in \hat{C}_t$ and $\|\mathbf{w}_{t}\|_1=1$. 
Since projection yields the closest vector satisfying the constraints, the resulting vector from \Cref{eq:group-neutralized-position-vector1,eq:group-neutralized-position-vector2,eq:group-neutralized-position-vector3} is, up to scaling, the CS weight vector closest to the prediction vector.

We consider a daily trading scheme with daily predictions and periodic retraining. 
Given $n_\mathrm{fit}\in\mathbb{N}$, we let $\{t_k: k\in [1:{n_\mathrm{fit}}] \}$ denote the retraining dates with $t_k<t_{k+1}$ and $t_{n_\mathrm{fit}+1}=\infty$.  
For each $k\in[1:n_\mathrm{fit}]$, a prediction model $f_{k}$ is trained on $\mathcal{H}_{t_k}$ when $\mathcal{H}_{t_k}$ becomes available, i.e., after the markets close on $t_k$ and before they open on $t_k+1$.  
On each trading date $t\in \mathcal{T}_k^\mathrm{test}=[t_k:t_{k+1})$, 
we predict price movements $\mathbf{Y}_{t+2}=[Y_{t+2,cd}]_{(c,d)\in \hat U_{t+2}}\in\mathbb{R}^{\hat n_{t+2}}$ from $t+1$ to $t+2$ as $\hat{\mathbf{Y}}_{t+2}=[\hat{Y}_{t+2,cd}]_{(c,d) \in \hat{U}_{t+2}}=f_{k}(\mathcal{H}_t)\in\mathbb{R}^{\hat{n}_{t+2}}$ where $Y_{t+2,cd}$ is defined below.
Based on these predictions, CS position vector 
$\mathbf{w}_{t+2}=[w_{t+2,cd}]_{(c,d)\in\hat{U}_{t+2}}\in\mathbb{R}^{\hat{n}_{t+2}}$ is computed via \cref{eq:group-neutralized-position-vector1,eq:group-neutralized-position-vector2,eq:group-neutralized-position-vector3}. 
The prediction and CS position vectors are generated after the markets close at $t$ (or after training if $t=t_k$) and before they open at $t+1$. 
In the markets, the CS position vector is then established at $t+1$ and liquidated at $t+2$, with trades executed at prices $F_{t+1,cd}$ and $F_{t+2,cd}$, respectively.
No datum in $H_{t'}$ with $t'\ge t_k+1$ is used to train $f_k$, and no datum in $H_{t'}$ with $t'\ge t+1$ is used to produce $(\hat{\mathbf{Y}}_{t+2},\mathbf{w}_{t+2})$, thereby precluding look-ahead bias.

\subsection{Prediction Model}
\label{subsec:prediction}

% \red{$\check R_{tcd} \leftarrow R_{tcd}^\circ$}

% \red{$\bar R_{tcd} \leftarrow R_{tcd}^\bullet $}

For each $k\in[1:n_\mathrm{fit}]$, given history $\mathcal{H}_{t_k}$, our prediction problem is to train a model $f_{k}(\mathcal{H}_t)=[\hat{Y}_{t+2,cd}]_{(c,d) \in \hat{U}_{t+2}}\in\mathbb{R}^{\hat{n}_{t+2}}$ that predicts $[Y_{t+2,cd}]_{(c,d)\in {\hat U_{t+2}}}$, so as to minimize
\begin{equation}
\sum_{t\in \mathcal{T}_k^\mathrm{test}}\mathbb{E} \left[ l_{p} \left(
[\hat{Y}_{t+2,cd}]_{(c,d)\in {\hat{U}_{t+2}}}, 
[Y_{t+2,cd}]_{(c,d)\in {\hat{U}_{t+2}}} 
\right) \right],  
\end{equation}
where $\mathcal{T}_k^\mathrm{test}=[t_k:t_{k+1})$ and $l_p$ is a loss function. 
For $t\in\mathbb{N}$ and $(c,d)\in \hat U_t$, the target $Y_{tcd}$ is defined as the normalized, commodity-wise demeaned return
% \red{$\check R_{tcd} \leftarrow R_{tcd}^\circ$}
% \red{$\bar R_{tcd} \leftarrow R_{tcd}^\bullet $}
% \begin{equation}
%     Y_{tcd}=\Phi^{-1}(\widehat{\operatorname{cdf}}(\check{R}_{tcd};[\check{R}_{tcd}]_{(c,d)\in \hat{U}_{t}}))  \label{eq:prob_int}
% \end{equation} 
\begin{equation}
    Y_{tcd}=\Phi^{-1}(\widehat{\operatorname{cdf}}(R^\circ_{tcd};[R^\circ_{tcd}]_{(c,d)\in \hat{U}_{t}}))  \label{eq:prob_int}
\end{equation} 
where $\Phi^{-1}$ is the standard normal quantile function, and the commodity-wise demeaned return is defined as
\begin{equation}
 R^\circ_{tcd}=R_{tcd}-R^\bullet_{tc}   \label{eq:demeaned_return}
\end{equation}
with $R^\bullet_{tc}=\mathrm{avg}(R_{tcd}; d \in  \hat{D}_{tc})$. 
By the probability integral transform, $\widehat{\operatorname{cdf}}(R^\circ_{tcd};[R^\circ_{tcd}]_{(c,d)\in  \hat{U}_{t}})$ is approximately uniformly distributed on $[0,1]$, and $\Phi^{-1}$ maps it to an approximately standard normal variable. 
Since the standard-normal target can stabilize optimization \citep{lecun2002efficient,glorot2010understanding,hong2023index}, we adopt this prediction target. 
The motivation for using $R^\circ_{tcd}$ is discussed below.

When addressing a task related to $\Pi(\mathbf{w}_t)$ for a CS strategy, we can predict $R^\circ_{tcd}$ in lieu of $R_{tcd}$.   
For $\mathbf{w}_t\in\mathcal{W}_{t,\mathrm{CS}}$, its strategy return satisfies 
$\Pi(\mathbf{w}_t)
=\sum_{c\in \hat C_t} \sum_{d\in \hat D_{tc}} w_{tcd} R_{tcd}
=\sum_{c\in \hat C_t} (\sum_{d\in \hat D_{tc}}   w_{tcd} R_{tcd} -  R^\bullet_{tc}\sum_{d\in \hat D_{tc}} w_{tcd})
=\sum_{c\in \hat C_t} \allowbreak \sum_{d\in \hat D_{tc}}  w_{tcd} (R_{tcd}-R^\bullet_{tc})
=\sum_{(c,d)\in \hat U_t} R^\circ_{tcd} w_{tcd}$ 
where each equality follows from \cref{eq:Gaam_def}, \cref{eq:def_cs_multi}, algebraic manipulation, and \cref{eq:demeaned_return}, in that order. 
Thus, $R^\circ_{tcd}$ can replace $R_{tcd}$ in \cref{eq:Gaam_def} to obtain the return of a CS strategy.

% Due to their different exposures to the underlying commodity, 
As $R^\circ_{tcd}$ can be less exposed to the underlying commodity, 
$R^\circ_{tcd}$ can be more predictable than $R_{tcd}$. 
Fix $(c,d)\in \hat U_t$. 
The quantity $R^\circ_{tcd}$ can be interpreted as the return of a CS strategy for $c$ with a position vector $[\mathbb{I}_{\{d'=d\}}]_{d'\in \hat D_{tc}}-\tfrac{1}{\hat n_{tc}} \mathbf 1 \in \mathbb R^{\hat n_{tc}}$, which becomes a weight vector after scaling.
In contrast, $R_{tcd}$ corresponds to the return of a LO strategy for $c$ with weight $[\mathbb{I}_{\{d'=d\}}]_{d'\in \hat D_{tc}}$. 
By \Cref{prop:low_delta} and its experimental evidence in \Cref{sec:experiments}, $R_{tcd}^\circ$ has lower exposure to the underlying commodity than $R_{tcd}$ in a large fraction of cases. 
In addition, the data for underlying commodities are intrinsically incomplete.\footnotemark\footnotetext{\label{fn:proxy}
True spot markets for some underlying commodities do not exist \citep{gibson1990stochastic}, and thus their spot price data are not directly observable.} 
Therefore, we predict $R_{tcd}^\circ$.

For each $k\in[1:n_\mathrm{fit}]$, 
we train $f_k$ on
$\{(\mathcal{H}_t,[\tilde Y_{t+2,cd}]_{(c,d)\in \hat U_{t+2} \cap \tilde U_{t+2}}):t\in \mathcal{T}_k^\mathrm{fit}\}$,
validate on
$\{(\mathcal{H}_t,[\tilde Y_{t+2,cd}]_{(c,d)\in \hat U_{t+2} \cap \tilde U_{t+2}}):t\in \mathcal{T}_k^\mathrm{val}\}$,
and evaluate on
$\{(\mathcal{H}_t,[\tilde Y_{t+2,cd}]_{(c,d)\in \hat U_{t+2} \cap \tilde U_{t+2}}):t\in \mathcal{T}_k^\mathrm{test}\}$,
where $\mathcal{T}_k^\mathrm{fit} \cap \mathcal{T}_k^\mathrm{val} = \emptyset$, $\mathcal{T}_k^\mathrm{fit} \cup \mathcal{T}_k^\mathrm{val} \subset [1:t_k-2]$, and $\mathcal{T}_k^\mathrm{test}=[t_k:t_{k+1})$. 
Since $\hat U_{t+2}$ is determined at $t$ based solely on $\mathcal{H}_{t}$, some contracts in $\hat U_{t+2}$ may not be observable at $t+1$ or $t+2$, i.e., $\tilde V_{t+1,cd} \tilde V_{t+2,cd}=0$; thus, we use $\hat U_{t+2} \cap \tilde U_{t+2}$ in lieu of $\hat U_{t+2}$ for the model training and evaluation:  
\begin{equation}
\tilde Y_{tcd}=\Phi^{-1}(\widehat{\operatorname{cdf}}(\tilde R^\circ_{tcd};[\tilde R^\circ_{tcd}]_{(c,d)\in \hat{U}_{t} \cap \tilde U_t})), \quad  
\tilde R^\circ_{tcd}=\tilde R_{tcd}-\tilde R^\bullet_{tc}, \quad
\tilde R^\bullet_{tc}=\mathrm{avg}(\tilde R_{tcd}; d \in  \hat{D}_{tc} \cap \tilde{D}_{tc}). 
\label{eq:empirical_target}
\end{equation}
Note that the training and validation datasets are constructed solely from $\mathcal{H}_{t_k}$, which is given for training $f_k$ in the problem, and do not use any future data, i.e., $\mathcal{H}_{t'}$ for $t'>t_k$, thereby ensuring no look-ahead bias.\footnote{The pair $(\mathcal{H}_{t_k-2},[\tilde Y_{t_k,cd}]_{(c,d)\in \hat U_{t_k} \cap \tilde U_{t_k}})$ is the latest sample that can be used for training or validation. 
By \cref{eq:prob_int,eq:demeaned_return} and the definition of $U_{t_k}$, 
the latest historical data involved in $(\mathcal{H}_{t_k-2},[\tilde Y_{t_k,cd}]_{(c,d)\in \hat U_{t_k} \cap \tilde U_{t_k}})$ are $\mathcal{H}_{t_k}$, thereby ensuring no look-ahead bias.}  
Additionally, $\mathcal{T}_k^\mathrm{test}$ for $k\in[1:n_\mathrm{fit}]$ partition $[t_1:t_{n_\mathrm{fit}+1})$, i.e.,
$\bigcup_{k\in[1:n_\mathrm{fit}]} \mathcal{T}_k^\mathrm{test}=[t_1:t_{n_\mathrm{fit}+1})$
with $\mathcal{T}_k^\mathrm{test}\cap \mathcal{T}_{k'}^\mathrm{test}=\emptyset$ for $k\neq k'$.

\textbf{Architecture Overview}. 
To solve the problem, we propose a bi-level hierarchical graph learning architecture $f_k (\mathcal{H}_t)=
(f_{k}^\mathrm{H} \circ f_{k}^\mathrm{B} \circ f_{k}^\mathrm{L} \circ f_{k}^\mathrm{P})(\mathcal{H}_t)
=[\hat{Y}_{t+2,cd}]_{(c,d)\in \hat U_{t+2}}$. 
The preprocessing function $f_{k}^\mathrm{P}$ maps $\mathcal{H}_t$ to a bi-level hierarchical graph, and the graph lifting function $f_{k}^\mathrm{L}$ aligns TTM grids across commodities by extending the graph along the TTM dimension. 
The bi-level convolution function $f_{k}^\mathrm{B}$ then performs TTM-conditioned propagation through the extended bi-level graph, followed by the head $f_{k}^\mathrm{H}$, which outputs predictions $[\hat{Y}_{t+2,cd}]_{(c,d)\in \hat U_{t+2}}$. 
Throughout these functions, our architecture captures TTM-dependent information embedded in interrelationships among commodity futures contracts.

\begin{figure}
    \centering
    \includegraphics[width=0.95\textwidth,keepaspectratio]{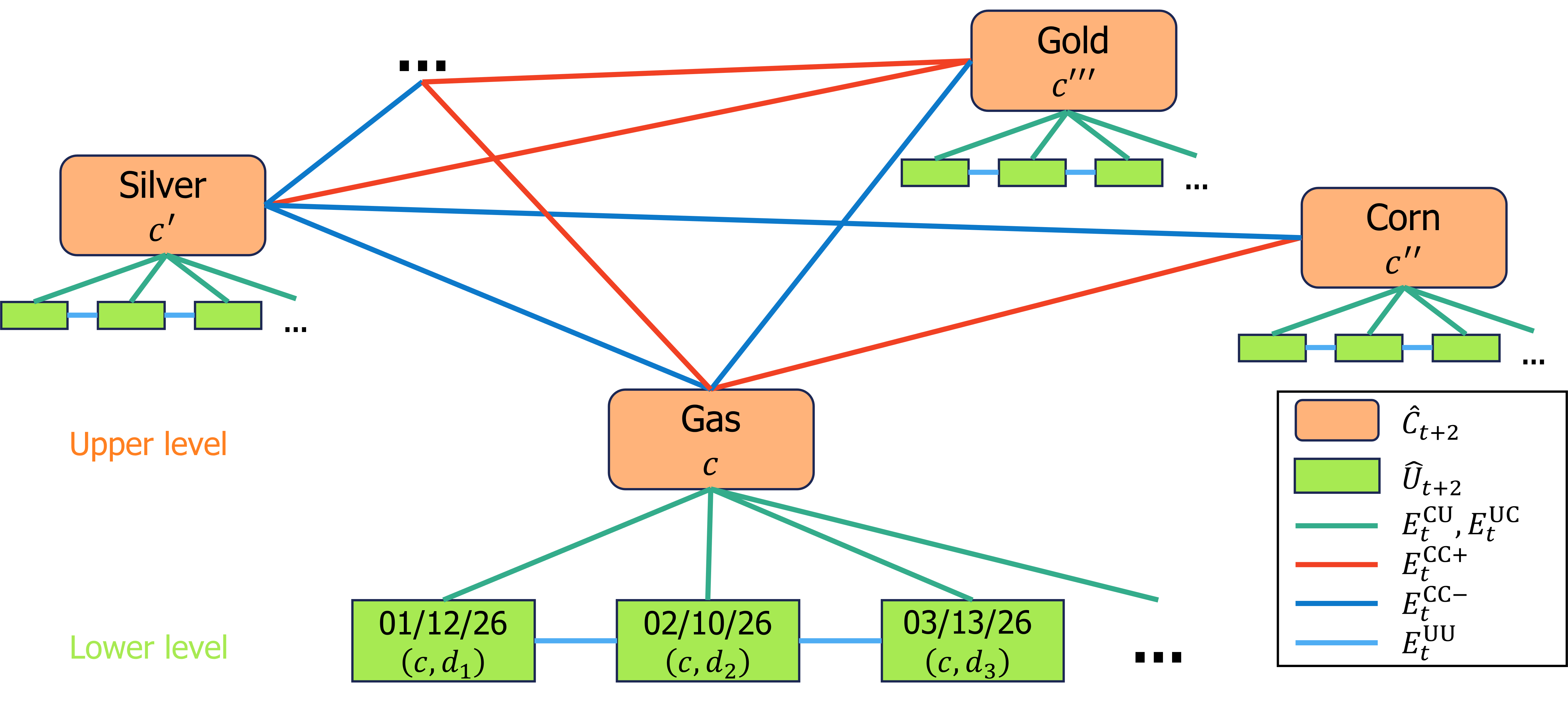}
    \captionof{figure}{Bi-Level Hierarchical Graph}
    \label{fig:hierarchical_graph}
\end{figure}

\textbf{The preprocessing function} $f_{k}^\mathrm{P}(\mathcal{H}_t)$ outputs a bi-level graph $G_{t}=(\hat{C}_{t+2} \cup \hat{U}_{t+2},  E_{t}^\mathrm{CC+} \cup E_{t}^\mathrm{CC-} \cup E_{t}^\mathrm{CU} \cup E_{t}^\mathrm{UC} \cup E_{t}^\mathrm{UU}, X_{t})$, illustrated in \Cref{fig:hierarchical_graph}. 
The commodities in $\hat{C}_{t+2}$ form the upper-level nodes, and their contracts in $\hat{U}_{t+2}$ constitute the lower-level nodes.
The upper-level nodes are connected via $E_{t}^{\mathrm{CC+}}\subset\hat{C}_{t+2}\times\hat{C}_{t+2}$ and $E_{t}^{\mathrm{CC-}}\subset\hat{C}_{t+2}\times\hat{C}_{t+2}$, which denote the edge sets corresponding to positive and negative correlations between commodities, respectively.
The cross-level connections are given by $E_{t}^{\mathrm{CU}}\subset \hat{C}_{t+2} \times \hat{U}_{t+2}$ and $E_{t}^{\mathrm{UC}}\subset \hat{U}_{t+2} \times \hat{C}_{t+2}$, linking each underlying commodity and its contracts. 
The lower-level nodes are connected via $E_{t}^\mathrm{UU} \subset \hat{U}_{t+2} \times \hat{U}_{t+2}$, which links each contract to its two nearest-maturity contracts (one shorter and one longer) with the same underlying commodity\footnote{For each commodity, contracts with the smallest or largest maturity have only one edge in $E_{t}^{\mathrm{UU}}$}, reflecting the economic insight \citep{brennan1976supply,gibson1990stochastic,schwartz1997stochastic,casassus2005stochastic} that futures contracts with similar TTMs share similar economic
characteristics.  
The matrix $X_{t} \in \mathbb{R}^{\hat n_{t+2} \times n_\mathrm{sam}^\mathrm{min}}$ contains node features for $\hat{U}_{t+2}$.

To ensure proper backtesting, we define the trading universe $\hat{U}_{t+2}$ as
\begin{equation}
    \hat{U}_{t+2} = \{(c,d)\in \hat{U}'_{t+2}: \hat \sigma_{tc}^x>0\}  \label{eq:informative_features} 
\end{equation}
where $\sigma_{tc}^x$ is defined below, and $\hat{U}'_{t+2}$ is defined as the set of $(c,d)\in \mathcal{C}\times\mathcal{D}_c$ such that
\begin{align}
    & t+2 \leq T_{cd}, \label{eq:univ_maturity}
    \\& T_{cd}-t \leq \tau^\mathrm{max}, \label{eq:univ_ttm}
    \\& \prod_{t'\in (t-n_\mathrm{sam}^\mathrm{min},t]} \tilde V_{t'cd} >0. \label{eq:univ_past_trades}
\end{align} 
Here, $\tau^\mathrm{max}\in\mathbb{N}$ and $n_\mathrm{sam}^\mathrm{min} \in\mathbb{N}$ are hyperparameters. 
The constraint in \cref{eq:informative_features} ensures $X_t$ is well-defined, as discussed below. 
\Cref{eq:univ_maturity} ensures that position vector determined at $t$ can be liquidated at $t+2$, and \cref{eq:univ_ttm} excludes long-TTM contracts, which tend to be illiquid \citep{xu2023neural}. 
\Cref{eq:univ_past_trades} implies that $(c,d)\in \hat{U}'_{t+2}$ is traded in the market on all of the most recent $n_\mathrm{sam}^\mathrm{min}$ trading dates, thereby ensuring liquidity. 
Unlike \cite{hu2025graph,tan2024futures}, our trading universe evolves over time, allowing node creation and deletion.

To enable neural networks to capture information in $\mathcal{H}_t$, we define the node feature matrix $X_{t} =[\mathbf{x}_{tcd}]_{(c,d)\in \hat U_{t+2}} \in \mathbb{R}^{\hat n_{t+2} \times n_\mathrm{sam}^\mathrm{min}}$ where $\mathbf{x}_{tcd}=[x_{tcd\tau}]_{\tau\in[0:n_\mathrm{sam}^\mathrm{min})}\in\mathbb{R}^{n_\mathrm{sam}^\mathrm{min}}$ and 
\begin{equation}
x_{tcd\tau} = \Phi^{-1} (\widehat{\operatorname{cdf}}( x'_{tcd\tau};[x'_{tcd\tau'}]_{(c,d)\in \hat U_{t+2},\tau' \in[0:n_\mathrm{sam}^\mathrm{min})})) \label{eq:normal-features} 
\end{equation}
for $t\in [n_\mathrm{sam}^\mathrm{min}:\infty)$, $(c,d) \in \hat U_{t+2}$, and $\tau\in[0:n_\mathrm{sam}^\mathrm{min})$. 
For $t\in [n_\mathrm{sam}^\mathrm{min}:\infty)$, $(c,d) \in \hat{U}'_{t+2}$, and $\tau\in[0:n_\mathrm{sam}^\mathrm{min})$, we define 
\begin{equation}
x'_{tcd\tau} = x''_{tcd\tau}/\hat \sigma_{tc}^x \label{eq:standardize-features} 
\end{equation}
where 
\begin{align}
\hat \sigma_{tc}^x &= \mathrm{std}(x''_{tcd''\tau};\tau\in[0:n_\mathrm{sam}^\mathrm{min}), (c,d'') \in \hat{U}'_{t+2}), \label{eq:std-features}
\\ x''_{tcd\tau} &= \log(\tilde F_{t-\tau,cd}) - \hat \mu_{t c d \cdot}^x - \hat \mu_{t c \cdot \tau}^x + \hat \mu_{t c \cdot \cdot}^x, \label{eq:cenering-features} 
\\ \hat \mu_{t c d \cdot}^x &= \mathrm{avg}(\log(\tilde F_{t-\tau',cd}) ;\tau'\in[0:n_\mathrm{sam}^\mathrm{min})), \label{eq:mu-features1}
\\  \hat \mu_{t c \cdot \tau}^x &= \mathrm{avg}(\log(\tilde F_{t-\tau,cd''}) ; d'' \in \{d': (c,d')\in \hat{U}'_{t+2}\}), 
\label{eq:mu-features2}
\\  \hat \mu_{t c \cdot \cdot}^x &= \mathrm{avg}(\log(\tilde F_{t-\tau',cd''}) ; \tau' \in[0:n_\mathrm{sam}^\mathrm{min}), 
d'' \in \{d': (c,d')\in \hat{U}'_{t+2}\}). \label{eq:mu-features3}
\end{align}
\Cref{eq:cenering-features,eq:mu-features1,eq:mu-features2,eq:mu-features3} perform two-way ANOVA-style centering of log prices. 
To account for differing price scales across commodities, \cref{eq:standardize-features} scales the centered values within each commodity. 
Thus, each node feature contains centered and scaled log-price information over the most recent $n_\mathrm{sam}^\mathrm{min}$ days. 
Since standard-normal features can stabilize optimization \citep{lecun2002efficient,glorot2010understanding,hong2023index}, we normalize $x'_{tcd\tau}$ in \cref{eq:normal-features}. 

% \red{$\check{R}_{tc} \leftarrow \hat{R}^\circ_{tc}$}

% \red{$\check{R}_{tcd}^* \leftarrow \tilde{R}_{tcd}^*$}

To capture inter-commodity relationships, we define the inter-commodity edge sets $E_t^\mathrm{CC+}=\{(c,c')\in\hat{C}_{t+2}\times\hat{C}_{t+2}: \hat{\rho}_{tcc'} \geq \rho^*\}$
and
$E_t^\mathrm{CC-}=\{(c,c')\in\hat{C}_{t+2}\times\hat{C}_{t+2}: \hat{\rho}_{tcc'} \leq -\rho^*\}$.
% with $E_t^\mathrm{CC-}$ defined analogously with the threshold $\leq -\rho^*$. 
Here, $\rho^*>0$ is a predefined threshold, and $\hat{\rho}_{tcc'}$ is the Pearson correlation computed from 
the accumulated sample set $\bigcup_{t' \in [1:t]} B_{t'cc'}$, provided that $|\{t'\in [1:t]: B_{t'cc'} \neq\emptyset\}| \geq n_\mathrm{sam}^\mathrm{min}$. 
If the minimum number of observation time points is not met, we set $\hat{\rho}_{tcc'}=0$. 
A sample batch $B_{tcc'}$ is defined as the set of maturity-aligned pairs of estimated-and-demeaned returns: 
$B_{tcc'}=\{(\hat{R}^\circ_{tc}(T), \hat{R}^\circ_{tc'}(T)): T \in (\mathcal{T}_c \cup \mathcal{T}_{c'}) \cap \mathrm{dom}(\hat{R}^\circ_{tc}) \cap \mathrm{dom}(\hat{R}^\circ_{tc'}), T-t\leq\tau^\mathrm{max} \}$.  
Here, $\hat{R}^\circ_{tc}(T)$ is the demeaned return estimation function with domain $\mathrm{dom}(\hat{R}^\circ_{tc})$. 
Since maturities are not aligned across commodities---i.e., a commodity-$c$ futures contract with maturity $T$, or its return, may not exist---we linearly interpolate $\hat{R}^\circ_{tc}(T)=\frac{\tilde{R}_{tcd^+}^*-\tilde{R}_{tcd^-}^*}{T_{cd^+}-T_{cd^-}}(T-T_{cd^-})+\tilde{R}_{tcd^-}^*$ 
where $(d^-,d^+)$ minimizes $|d^+-d^-|$ subject to $T \in [T_{cd^-}, T_{cd^+}]$ and both $\tilde{R}_{tcd^+}^*$ and $\tilde{R}_{tcd^-}^*$ are well-defined. 
We define 
\begin{equation}
\tilde{R}_{tcd}^*=\tilde{R}_{tcd}-\mathrm{avg} (\tilde{R}_{tcd'};d'\in D'_{tc} )    
\label{eq:demeaned_return_for_graph} 
\end{equation}
for $d\in D_{tc}$ with $|D'_{tc}|>1$, where $D'_{tc}=\{d' \in D_{tc}: T_{cd'}-t \leq \tau^\mathrm{max} \}$.
Consistent with the prediction target \cref{eq:prob_int}, which is based on demeaned returns \cref{eq:demeaned_return}, we use demeaned returns for $\hat{R}^\circ_{tc}(T)$.

\textbf{The graph lifting function} $f_k^\mathrm{L}(G_t)$ outputs an augmented graph $G_{t0}=(\hat{C}_{t+2,0} \cup \hat{U}_{t+2},  E_{t0}^\mathrm{CC+} \cup E_{t0}^\mathrm{CC-} \cup E_{t0}^\mathrm{CU} \cup E_{t0}^\mathrm{UC}\cup E_{t}^\mathrm{UU}, Z_{t0})$.
It transforms $\hat{C}_{t+2}$ into an extended node set $\hat{C}_{t+2,0}=\{(c,j):j \in [0:n_\mathrm{bas}]\}$ of virtual futures contracts where $j$ indexes a virtual basis TTM $\tau_j=j\cdot(\tau^\mathrm{max}/n_\mathrm{bas})$. 
% Using $\hat{C}_{t+2,0}$, we 
This aligns maturity grids $\mathcal{T}_c$ across commodities $c\in\hat C_{t+2}$ onto a shared virtual TTM grid $\{\tau_j:j\in [0:n_\mathrm{bas}]\}$, 
since commodities generally do not share identical futures maturities, i.e., $\mathcal{T}_{c'} \neq \mathcal{T}_{c''}$ for $c',c''\in\mathcal{C}$. 
Accordingly, the edges of $G_t$ are also extended along the TTM dimension, yielding $E_{t0}^\mathrm{CC+},E_{t0}^\mathrm{CC-}, E_{t0}^\mathrm{CU}, E_{t0}^\mathrm{UC}$. 
Specifically, $E_{t0}^\mathrm{CC+}=\{((n,j),(n',j)): (n,n') \in E_{t}^\mathrm{CC+}, j\in[0:n_\mathrm{bas}]\}$, $E_{t0}^\mathrm{CU}=\{((n,j),n'): (n,n') \in E_{t}^\mathrm{CU}, j\in[0:n_\mathrm{bas}]\}$ with $E_{t0}^\mathrm{CC-}$ and $E_{t0}^\mathrm{UC}$ defined analogously. 
While preserving the original topology, $E_{t0}^\mathrm{CC+}\cup E_{t0}^\mathrm{CC-}$ connects the virtual nodes with the same virtual TTMs, and $E_{t0}^\mathrm{CU}\cup E_{t0}^\mathrm{UC}$ connects virtual nodes and their original futures nodes. 
Next, $f_k^\mathrm{L}$ produces the initial embedding $Z_{t0}=[\mathbf{z}_{t,0,cd}]_{(c,d)\in{\hat U_{t+2}}}$ where $\mathbf{z}_{t,0,cd} = \operatorname{Dropout}(\phi(W_{\mathrm{L},k}\mathbf{x}_{tcd}+ \mathbf b_{\mathrm{L},k})) \in \mathbb{R}^{n_\mathrm{hid}}$ 
for $(c,d)\in\hat U_{t+2}$. 
Here, $n_\mathrm{hid}\in\mathbb{N}$ is a hyperparameter.

\begin{figure}
    \centering
    \includegraphics[width=0.95\textwidth,keepaspectratio]{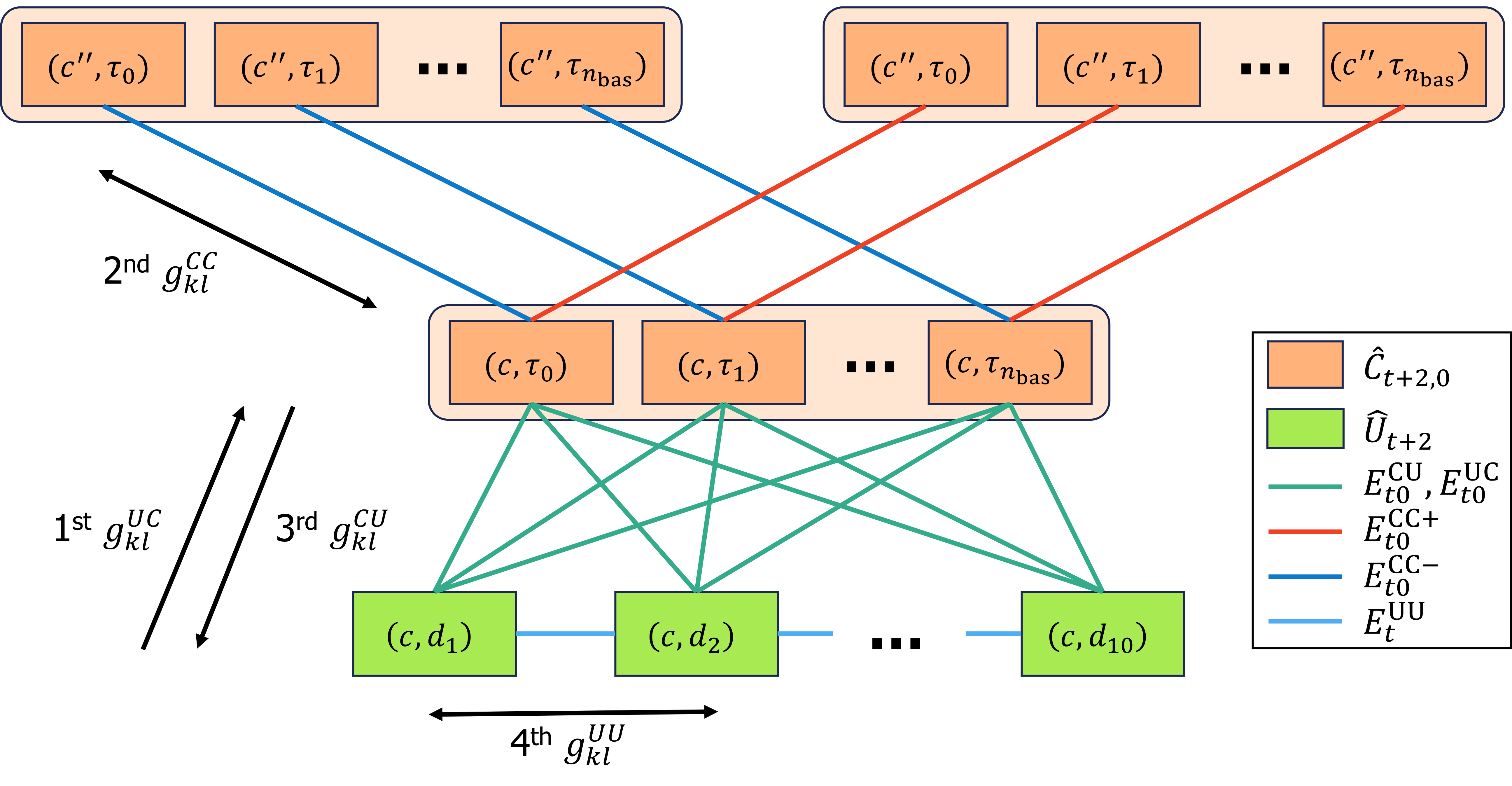}
    \captionof{figure}{Bi-Level Convolution}
    \label{fig:hgnn}
\end{figure}

\textbf{The bi-level convolution function} $f_{k}^\mathrm{B}(G_{t0})=(f_{k,l_\mathrm{con}}^\mathrm{B}\circ \dots \circ f_{k,1}^\mathrm{B}) (G_{t0})$ with $l_\mathrm{con}$ layers outputs $G_{t,l_\mathrm{con}}$, where each layer function $f_{kl}^\mathrm{B}$ maps $G_{t,l-1}$ to $G_{tl}$.  
Each layer output graph $G_{tl}$ shares the same node and edge sets but differs in its node embedding matrix $Z_{tl}=[\mathbf{z}_{tlcd}]_{(c,d)\in\hat U_{t+2}}\in \mathbb{R}^{ \hat{n}_{t+2} \times n_\mathrm{hid}}$, so that $G_{tl}=(\hat{C}_{t+2,0} \cup \hat{U}_{t+2},  E_{t0}^\mathrm{CC+} \cup E_{t0}^\mathrm{CC-} \cup E_{t0}^\mathrm{CU} \cup E_{t0}^\mathrm{UC}\cup E_{t}^\mathrm{UU}, Z_{tl})$. 
Each convolution layer function $f_{kl}^\mathrm{B}$ consists of four operations, illustrated in \Cref{fig:hgnn}: $f_{kl}^\mathrm{B} =g_{kl}^\mathrm{UU}\circ g_{kl}^\mathrm{CU} \circ g_{kl}^\mathrm{CC} \circ g_{kl}^\mathrm{UC}$ where all the operation outputs share the same node and edge sets but have distinct node embeddings. 
First, $g_{kl}^\mathrm{UC}$ elevates embeddings from lower-level contract nodes to TTM-aligned virtual futures nodes at the upper-level. 
Next, $g_{kl}^\mathrm{CC}$ propagates information among virtual futures contracts sharing the same TTM.  
Then, $g_{kl}^\mathrm{CU}$ transfers the aggregated upper-level information back to lower-level contract nodes. 
Finally, $g_{kl}^\mathrm{UU}$ propagates information across contracts with the nearest maturities within the same commodities. 
Throughout these steps, we distinguish contract relationships based on maturity differences and extract information accordingly via $g_{kl}^\mathrm{CC}$ and $g_{kl}^\mathrm{UU}$, in contrast to existing approaches \citep{hu2025graph,tan2024futures} that treat all relationships in the same manner regardless of maturity.
Each step is discussed next.

The elevating operator $g_{kl}^\mathrm{UC}$ transforms 
$Z_{t,l-1}=[\mathbf{z}_{t,l-1,cd}]_{(c,d)\in{\hat U_{t+2}}}\in \mathbb{R}^{\hat n_{t+2} \times n_\mathrm{hid}}$ 
into 
$Z_{tl}^\mathrm{UC}=[\mathbf{z}_{tlcj}^\mathrm{UC}]_{(c,j)\in\hat C_{t+2}\times [0:n_\mathrm{bas}]} \in \mathbb{R}^{\hat n_{t+2}(n_\mathrm{bas}+1) \times n_\mathrm{hid}}$, 
via linear interpolation 
\begin{equation}
\mathbf{z}_{tlcj}^\mathrm{UC}=\frac{\mathbf{z}_{t,l-1,cd^+}-\mathbf{z}_{t,l-1,cd^-}}{T_{cd^+}-T_{cd^-}}\cdot(t+\tau_j-T_{cd^-})+\mathbf{z}_{t,l-1,cd^-}  \in\mathbb{R}^{n_\mathrm{hid}} \label{eq:interpol1}
\end{equation}
for $t+\tau_j\in [T_{c,\min \hat D_{t+2,c}},T_{c,\max \hat D_{t+2,c}}]$, 
where $d^-,d^+\in \hat D_{t+2,c}$ minimize $|T_{cd^+}-T_{cd^-}|$ subject to $t+\tau_j\in [T_{cd^-},T_{cd^+}]$. 
Since futures contracts with similar TTMs share similar economic characteristics \citep{brennan1976supply,gibson1990stochastic,schwartz1997stochastic,casassus2005stochastic}, we use interpolation based on neighboring maturities.
For $t+\tau_j\notin [T_{c,\min \hat D_{t+2,c}},T_{c,\max \hat D_{t+2,c}}]$, we use the end point values for the extrapolation by setting $\mathbf{z}_{tlcj}^\mathrm{UC}=\mathbf{z}_{t,l-1,cd^{*}}$ with $d^{*}=\mathbb{I}_{\{\tau_j+t<T_{c,\min \hat D_{t+2,c}}\}} \cdot \min \hat D_{t+2,c}+\mathbb{I}_{\{\tau_j+t>T_{c,\max \hat D_{t+2,c}}\}} \cdot  \max \hat D_{t+2,c}$. 
Since both interpolation and extrapolation are linear, 
for each $c\in\hat C_{t+2}$, we can write
$[\mathbf{z}_{tlcj}^\mathrm{UC}]_{j\in [0:n_\mathrm{bas}]}
=\sum_{d\in \hat D_{t+2,c}} \mathbf{p}_{tcd} \mathbf{z}_{t,l-1,cd}^\top \in \mathbb{R}^{(n_\mathrm{bas}+1)\times n_\mathrm{hid}}$ 
for some constant vectors 
$\mathbf{p}_{tcd}\in\mathbb{R}^{n_\mathrm{bas}+1}$.
Note that $\mathbf{p}_{tcd}$ does not depend on $l$ and $Z_{t,l-1}$ but depends on $t, \tau_j,T_{cd^+}$, and $T_{cd^-}$.  
Thus, $\mathbf{p}_{tcd}$ can be precomputed before training and inference, and shares the same values for all layers. 
As a result of $g_{kl}^\mathrm{UC}$, each $\mathbf{z}_{tlcj}^\mathrm{UC}$ contains the information of a virtual future contract with TTM $\tau_j$ for commodity $c$.

The inter-commodity propagation $g_{kl}^\mathrm{CC}$ transforms $Z_{tl}^\mathrm{UC}$ into 
\begin{equation}
    Z_{tl}^{CC}=\phi([Z_{tl}^\mathrm{UC}~Z_{tl}^{CC+} ~Z_{tl}^{CC-}] W_{\mathrm{CC},kl} + \mathbf 1 \mathbf b_{\mathrm{CC},kl}^\top) \in\mathbb{R}^{\hat n_{t+2}(n_\mathrm{bas}+1)\times n_\mathrm{hid}}
\end{equation}
where $[Z_{tl}^\mathrm{UC}~Z_{tl}^{CC+} ~Z_{tl}^{CC-}]\in\mathbb{R}^{\hat n_{t+2}(n_\mathrm{bas}+1)\times (3n_\mathrm{hid})}$, 
$Z_{tl}^{CC+}=
\phi(\operatorname{CONV}(\hat{C}_{t+2,0},E^{CC+}_{t0},Z_{tl}^{UC};\theta_{CC,l}^+))
\in\mathbb{R}^{\hat n_{t+2}(n_\mathrm{bas}+1)\times n_\mathrm{hid}}$, 
and 
$Z_{tl}^{CC-}=
\phi(\operatorname{CONV}(\hat{C}_{t+2,0},E^{CC-}_{t0},Z_{tl}^{UC};\theta_{CC,l}^-))
\in\mathbb{R}^{\hat n_{t+2}(n_\mathrm{bas}+1)\times n_\mathrm{hid}}$. 
Here, $\mathrm{CONV}$ denotes a graph convolution, and  $\theta_{CC,l}^+$ and $\theta_{CC,l}^-$ are distinct learnable parameters. 
Since $E^{CC+}_{t0}$ and $E^{CC-}_{t0}$ connect nodes with the same virtual TTM, 
each node $(c,j^*) \in \hat{C}_{t+2,0}$ aggregates information $\mathbf{z}_{tlc'j^*}^\mathrm{UC}$ from other nodes with the same TTM $j^*$ but different commodities $c'$, while preserving the original topology of $E^{CC+}_{t}$ and $E^{CC-}_{t}$. 
In other words, $g_{kl}^\mathrm{CC}$ propagates TTM-aligned information $[\mathbf{z}_{tlcj}^\mathrm{UC}]_{j\in [0:n_\mathrm{bas}]}$ 
across commodities $c \in \hat C_{t+2}$.

The lowering function $g_{kl}^\mathrm{CU}$ converts $Z_{tl}^{CC}=[\mathbf{z}_{tlcj}^\mathrm{CC}]_{c\in\hat C_{t+2},j\in [0:n_\mathrm{bas}]} \in \mathbb{R}^{\hat n_{t+2}(n_\mathrm{bas}+1) \times n_\mathrm{hid}}$ into $Z_{tl}^{CU}=[\mathbf{z}_{tlcd}^\mathrm{CU}]_{(c,d)\in{\hat U_{t+2}}}\in \mathbb{R}^{\hat n_{t+2} \times n_\mathrm{hid}}$ via linear interpolation  
\begin{equation}
    \mathbf{z}_{tlcd}^\mathrm{CU}= \frac{\mathbf{z}_{tlc,j^{-}+1}^\mathrm{CC}-\mathbf{z}_{tlcj^{-}}^\mathrm{CC}}{\tau_{j^{-}+1}-\tau_{j^{-}}}\cdot(T_{cd}-t-\tau_{j^-})+\mathbf{z}_{tlcj^{-}}^\mathrm{CC} \in\mathbb{R}^{n_\mathrm{hid}} \label{eq:interpol2}
\end{equation}
where $j^-\in [0:n_\mathrm{bas})$ satisfies $T_{cd}-t \in [\tau_{j^-},\tau_{j^-+1}]$. 
Since the intervals $[\tau_{j^-},\tau_{j^-+1}]$ partition an interval $[0,\tau^{\max}]$, the minimization used for the indexes in \cref{eq:interpol1} is not required. 
Since this interpolation is linear, it can be precomputed, as discussed in the elevating operator \cref{eq:interpol1}.

The intra-commodity propagation $g_{kl}^\mathrm{UU}$ maps 
$Z_{tl}^{CU}$ to 
$Z_{tl}=[\mathbf{z}_{tlcd}]_{(c,d)\in{\hat U_{t+2}}}\in \mathbb{R}^{\hat n_{t+2} \times n_\mathrm{hid}}$ where
\begin{equation}
    \mathbf{z}_{tlcd}=\phi(W_{\mathrm{UU},kl}
    [\mathbf{z}_{tlcd}^\mathrm{CU};
    \mathbf{z}_{tlcd}^\mathrm{UU}]+\mathbf b_{\mathrm{UU},kl}) \in \mathbb{R}^{n_\mathrm{hid}} 
    \label{eq:intra-commodity-propagation}
\end{equation}
with
\begin{align}
    \mathbf{z}_{tlcd}^\mathrm{UU}
    &= \sum_{d'\in \{d'': (d'',d)\in E_{t}^\mathrm{UU}\} }  \mathbf{z}_{tlcdd'}^\mathrm{UU*} \in \mathbb{R}^{n_\mathrm{hid}}, \label{eq:intra-commodity-propagation-agg} 
    \\
    \mathbf{z}_{tlcdd'}^\mathrm{UU*}&=\phi(W_{\mathrm{UU*},kl} \cdot \operatorname{LayerNorm}( 
    \mathbf{z}_{tlcd}^\mathrm{CU}- \mathbf{z}_{tlcd'}^\mathrm{CU}
    )+\mathbf b_{\mathrm{UU*},kl}) \in \mathbb{R}^{n_\mathrm{hid}}, \label{eq:intra-commodity-propagation-diff} 
\end{align}
In \cref{eq:intra-commodity-propagation}, $\mathbf{z}_{tlcd}$ integrates the target contract embedding $\mathbf{z}_{tlcd}^\mathrm{CU}$ with neighboring information $\mathbf{z}_{tlcd}^\mathrm{UU}$. 
\Cref{eq:intra-commodity-propagation-agg} aggregates messages $\mathbf{z}_{tlcdd'}^{\mathrm{UU*}}$ from neighboring contracts in $E_t^{\mathrm{UU}}$. 
Each message $\mathbf{z}_{tlcdd'}^{\mathrm{UU*}}$ is computed from the normalized difference between the target and source embeddings as in \cref{eq:intra-commodity-propagation-diff} to capture relative deviations along the term structure within the same commodity. 
Consequently, $g_{kl}^\mathrm{UU}$ extracts intra-commodity information from nearest-maturity contracts of the same underlying commodity, reflecting the economic insight that futures with similar TTMs exhibit similar characteristics \citep{brennan1976supply,gibson1990stochastic,schwartz1997stochastic,casassus2005stochastic}.

Throughout the bi-level convolution $f_{k}^\mathrm{B}$, each futures contract node integrates its information with that of related nodes while explicitly accounting for their TTMs, in contrast to existing approaches \citep{hu2025graph,tan2024futures} that ignore the maturity dependence of these relationships. 
In their approaches, interactions between futures contracts are treated without differentiation by maturity, so information from contracts with similar maturities is processed in the same manner as that from contracts with dissimilar maturities. 
In our method, however, each convolution layer $f_{k,l}^\mathrm{B}$ uses two distinct message passing functions: $g_{kl}^{\mathrm{CC}}$, which propagates information across different commodities at the same TTM, and $g_{kl}^{\mathrm{UU}}$, which propagates information across nearby TTMs within the same commodity. 
By stacking multiple layers of $f_{k,l}^\mathrm{B}$, the model propagates information across contracts with larger maturity gaps through successive local interactions, thereby capturing higher-order dependencies over the joint commodity-maturity space. 
This design enables maturity-dependent propagation by explicitly distinguishing the relationships based on TTM while capturing inter-commodity and intra-commodity structures.

\textbf{The head} $f_{k}^\mathrm{H}(G_{tkl_\mathrm{con}})$ transforms the final node embedding matrix $Z_{t,l_\mathrm{con}}$ into a prediction vector $[\hat{Y}_{t+2,cd}]_{(c,d)\in \hat U_{t+2}}$. 
For each $(c,d)\in \hat U_{t+2}$, the head $f_{k}^\mathrm{H}(G_{tkl_\mathrm{con}})$ outputs $\hat{Y}_{t+2,cd} = W_{\mathrm{H},k}\mathbf{z}_{tl_\mathrm{con}cd}+\mathbf b_{\mathrm{H},k} \in \mathbb{R}$.  
During the training, we minimize the mean squared error between $\hat{Y}_{t+2,cd}$ and $Y_{t+2,cd}$ for $(c,d)\in \hat U_{t+2} \cap \tilde U_{t+2}$.

\section{Experiments}
\label{sec:experiments}

We conduct experiments to address the following questions. 
\begin{enumerate}[label=\textbf{(Q\arabic*)}, leftmargin=*]
    \item \label{rq:assumptions} Can CS strategies be more effective than LO strategies for statistical arbitrage in commodity futures markets in terms of risk and risk-adjusted return?   

    \item \label{rq:prediction} Are TTM-dependent interrelationships among futures contracts instrumental for statistical arbitrage? 

    \item \label{rq:architecture} Are both inter-commodity and intra-commodity relationships instrumental?
\end{enumerate}
Specifically, we verify \cref{eq:test1,eq:test2,eq:test3} in the data, together with trading examples of LO baselines, for \cref{rq:assumptions}; compare our method against benchmarks in prediction and trading for \cref{rq:prediction}; and conduct ablation studies for \cref{rq:architecture}.

\textbf{Baselines, Benchmarks, and Ablations}. 
We consider the following LO strategy baselines for \ref{rq:assumptions}. 
\begin{enumerate}[label=\textbf{(B\arabic*)}, leftmargin=*]
\item \label{ablation:bm1} $\mathbf{w}_t={\hat{\mathbf{V}}_t}/{\mathbf{1}^\top \hat{\mathbf{V}}_t}$ where $\hat{\mathbf{V}}_t =[\Phi(\hat Y_{tcd})]_{(c,d)\in\hat U_{t}}$
\item \label{ablation:bm2} $\mathbf{w}_t={\hat{\mathbf{V}}_t}/{\mathbf{1}^\top \hat{\mathbf{V}}_t}$ where
$\hat{\mathbf{V}}_t =\operatorname{cdf}(\hat Y_{tcd};[\hat Y_{tcd}]_{(c,d)\in\hat U_{t}})$
\item \label{ablation:bm3} $\mathbf{w}_t={\hat{\mathbf{V}}_t}/{\mathbf{1}^\top \hat{\mathbf{V}}_t}$ where 
$\hat{\mathbf{V}}_t =[\hat Y_{tcd} - \hat Y_{t,\min}]_{(c,d)\in\hat U_t}$ and $\hat Y_{t,\min}=\min \{\hat Y_{tc'd'}:(c',d')\in \hat U_t\}$
\item \label{ablation:bm4} $\mathbf{w}_t=\frac{1}{\hat n_{t}} \mathbf{1} + k\mathbf{w}^{\mathrm{CS}}_t$ where $k=\min\{\frac{1}{\hat n_t |w_{tcd}^{\mathrm{CS}}|}:(c,d)\in\hat U_t, w_{tcd}^{\mathrm{CS}}<0\}$ and $\mathbf{w}^{\mathrm{CS}}_t=[w_{tcd}^{\mathrm{CS}}]_{(c,d)\in\hat U_t}$ is a CS weight position vector for $\hat U_{t}$
\end{enumerate} 
The first equations of \cref{ablation:bm1,ablation:bm2,ablation:bm3,ablation:bm4} ensure the unit sum requirement of LO positions, by scaling for \cref{ablation:bm1,ablation:bm2,ablation:bm3} and by using the zero sum property of CS positions for \cref{ablation:bm4}. 
The where clauses guarantee the nonnegative weight requirement of LO positions.  
From \cref{eq:prob_int,eq:empirical_target}, the first baseline transforms $\hat Y_{tcd}$ into a prediction for $\widehat{\operatorname{cdf}}(\tilde R^\circ_{tcd};[\tilde R^\circ_{tcd}]_{(c,d)\in \hat{U}_{t} \cap \tilde U_t})$. 
The second baseline directly maps $\hat Y_{tcd}$ to its empirical cumulative distribution value within the cross section $[\hat Y_{tcd}]_{(c,d)\in\hat U_{t}}$. 
The third baseline subtracts the minimum value of the predictions. 
The last baseline is an inverse transformation of $\check{\mathbf{w}}$ in \Cref{prop:ir}. 
The four resulting baseline LO positions preserve the ordering of the prediction or weight values: $\Phi$ and $\operatorname{cdf}$ in the first two baselines are monotonically increasing, and the translations in the last two do not change the order. 
Thus, \cref{ablation:bm1,ablation:bm2,ablation:bm3,ablation:bm4} are natural transformations from a prediction vector to an LO position vector.

The benchmarks for \ref{rq:prediction} include ridge linear regression, LightGBM \citep{ke2017lightgbm}, and multi-layered perceptrons (MLP) to assess the importance of the information extracted from commodity futures interrelationships.    
These methods take the node features $\mathbf{x}_{tcd}$, whose components are defined in \cref{eq:normal-features}, as predictor vectors, and use $Y_{t+2,cd}$ defined in \cref{eq:prob_int} as the prediction target, consistent with our method. 
Ridge linear regression is adopted due to its strong theoretical grounding, robustness to multicollinearity, and role as a canonical linear benchmark.
As gradient-boosted tree methods have been shown to outperform neural networks on numerous tabular datasets \citep{ke2017lightgbm,grinsztajn2022tree,shwartz2022tabular,mcelfresh2023neural,shmuel2024comprehensive}, such as $\mathbf{x}_{tcd}$ in our study, we include LightGBM as a benchmark.  
We also include MLP as a fundamental neural network baseline. 
The MLP consists of $l_\mathrm{con}$ hidden layers with $n_\mathrm{hid}$ hidden units, followed by a final linear output layer.  
Since these benchmarks are non-graph methods, comparing them with our method shows the value of incorporating relational information in commodity futures.

We also include graph neural networks (GNNs) as benchmarks for \ref{rq:prediction} to assess whether the TTM-dependence is instrumental in extracting information from commodity futures relationships. 
To predict $Y_{t+2,cd}$ given $\mathcal{H}_t$, the GNNs use a graph $(\hat U_{t+2}, E_t^{UU+}\cup E_t^{UU-}, X_t)$, where nodes represent contracts, edges encode positive and negative correlations, and node features follow \cref{eq:normal-features}. 
Consistent with our method, we define
$E_t^{UU+}=
\{((c,d),(c',d'))\in \hat U_{t+2} \times \hat U_{t+2}:
\hat \rho_{tcdc'd'} \geq \rho^*
\}$ and 
$E_t^{UU-}=
\{((c,d),(c',d'))\in \hat U_{t+2} \times \hat U_{t+2}:
\hat \rho_{tcdc'd'} \leq -\rho^*
\}$. 
Here, $\hat \rho_{tcdc'd'}$ is the Pearson correlation computed from 
the accumulated pairs of $(\tilde{R}_{t''cd}^*,\tilde{R}_{t''c'd'}^*)$ over $t''\in[1:t]$, using only time points where both returns are available and at least $n_\mathrm{sam}^\mathrm{min}$ observations exist. 
Since GNNs capture contract relationships but ignore TTMs, they serve to evaluate the importance of TTM dependence.

The GNN consists of an initial embedding layer, $l_\mathrm{con}$ interim layers, and a final head. 
The initial layer embeds $X_t$ as $Z^G_{t,0}=\phi( X_t W_{\mathrm{G},k0}+ \mathbf 1 \mathbf b_{\mathrm{G},k0}^\top)\in\mathbb{R}^{\hat n_{t} \times n_\mathrm{hid}}$. 
Each interim layer propagates information over $E_t^{UU+}$ and $E_t^{UU-}$: 
for $l\in[1:l_\mathrm{con}]$, the $l$-th interim layer output is 
\begin{equation}
    Z_{t,l}^{G} = \phi \left(
    \left[
\phi(\operatorname{CONV}(\hat U_{t+2}, E_t^{UU+}, Z_{t,l-1}^{G}; \theta_{G,l}^-)),~ 
\phi(\operatorname{CONV}(\hat U_{t+2}, E_t^{UU-}, Z_{t,l-1}^{G}; \theta_{G,l}^+))
\right]
W_{G,kl}
+\mathbf 1  \mathbf b_{G,kl}^\top
    \right)\in\mathbb{R}^{\hat n_{t} \times n_\mathrm{hid}}
\end{equation}
where $\mathrm{CONV}$ denotes a graph convolution, and $\theta_{G,l}^-$ and $\theta_{G,l}^+$ are learnable parameters. 
A final head outputs a prediction vector $[\hat Y_{t+2,cd}]_{(c,d)\in \hat U_{t+2}}= Z_{t,l_\mathrm{con}}^{G} W_{G,k,H}+ \mathbf 1 \mathbf b_{G,k,H}^\top \in\mathbb{R}^{\hat n_{t}}$.

We consider the following ablations for \ref{rq:architecture}, keeping all other components unchanged: 
\begin{enumerate}[label=\textbf{(A\arabic*)}, leftmargin=*]
\item \label{ablation:a1} Replace $f_{kl}^\mathrm{B} =g_{kl}^\mathrm{UU}\circ g_{kl}^\mathrm{CU} \circ g_{kl}^\mathrm{CC} \circ g_{kl}^\mathrm{UC}$ with $f_{kl}^\mathrm{B} =g_{kl}^\mathrm{UU}$, 
\item \label{ablation:a2} Replace $f_{kl}^\mathrm{B} =g_{kl}^\mathrm{UU}\circ g_{kl}^\mathrm{CU} \circ g_{kl}^\mathrm{CC} \circ g_{kl}^\mathrm{UC}$ with $f_{kl}^\mathrm{B} = g_{kl}^\mathrm{CU} \circ g_{kl}^\mathrm{CC} \circ g_{kl}^\mathrm{UC}$.  
\end{enumerate}
These variants remove inter-commodity and intra-commodity information, respectively. 
Comparing them with the full model verifies the necessity of the two relationships in \ref{rq:architecture}.

\textbf{Experimental settings}. 
We use daily commodity futures price data from LSEG Datastream accessed via Wharton Research Data Services.\footnote{See Appendix \Cref{appendix:data_processing} for details on data processing. 
} 
The data span from Aug. 1977 to Dec. 2025 and consist of $\mathcal{C}$, $\mathcal{D}_c$ and $\mathcal{T}_c$ for each $c\in\mathcal{C}$, and $\mathcal{H}_t$ for futures contracts listed on CME, CBOT, NYMEX, COMEX, and eCBOT and traded in U.S. dollars.
Trading dates are defined as weekdays on which the number of traded contract tickers exceeds 50\% of the average over the preceding 365 days. 
Since future trading dates are unavailable, TTM and the difference of maturities are estimated using calendar dates. 
To ensure sufficient training samples, we set $t_1$ as the first trading date in 2016, define $t_k$ as the first trading date of each subsequent year, and set $n_{\mathrm{fit}}=10$.
We conduct stratified sampling for $\mathcal{T}_k^\mathrm{val}$ to mitigate the influence of period-specific market effects, treating each month as a separate stratum. 
We set $\tau^\mathrm{max}$ to one year, $n_\mathrm{bas}=52$ (the number of weeks in a year), and $n_\mathrm{sam}^\mathrm{min}=28$ that ensures 99\% of contracts to be traded in the testing period. 
Statistics of the trading universe are provided in Appendix \Cref{appendix:universe}. 
Since $\hat U_{t}$ can contain unobservable contracts, i.e., contracts with $\tilde V_{t-1,cd} \tilde V_{t,cd}=0$, $\tilde R_{t,cd}$ is unavailable for such contracts; thus, we compute \cref{eq:Gaam_def} as $\hat \Pi(\mathbf{w}_{tc})=\sum_{d\in \hat D_{tc} \cap \tilde D_{tc}} w_{tcd} \tilde R_{tcd}$.

\begin{table}
\caption{Hyperparameter Grids}
\label{tab:hyper_grids}
\begin{tabular}{l l}
\hline
Method & Hyperparameter Grid \\ \hline
Our method, GNN, \ref{ablation:a2} & $\mathrm{CONV}\in\{$GCN, SAGE, GAT$\}$, $\mathrm{\#params} \in   \{10^4,10^5\}$, $\rho^*\in\{0.1,0.2,0.3\}$, $l_\mathrm{conv}\in   \{1,2,3\}$ \\
Ridge & $\mathrm{regularizer} \in \left\{ 10^{-10 + 0.1\, i} : i \in   [0:200]\right\}$ \\
MLP,\ref{ablation:a1} & $\mathrm{\#params} \in \{10^4,10^5\}$, $l_\mathrm{conv}\in \{1,2,3\}$ \\
LightGBM &  \parbox[t]{10cm}{$\texttt{learning\_rate} \in \{0.02, 0.05, 0.1\}$, $\texttt{num\_leaves} \in \{127, 255\}$,   $\texttt{min\_child\_weight} \in \{100, 3000\}$, $\texttt{min\_child\_samples} \in \{20, 1000\}$, $\texttt{num\_round} \in   \{100, 500, 1000\}$, $(\texttt{top\_rate}, \texttt{other\_rate}) \in   \allowbreak \{(0.05,0.05), \allowbreak (0.05,0.10), \allowbreak (0.10,0.10),   \allowbreak (0.15,0.10), \allowbreak (0.15,0.25), \allowbreak (0.20,0.10),   \allowbreak (0.25,0.10)\}$} \\
\hline
\end{tabular}
\end{table}

We evaluate test performance using hyperparameters selected by validation for each period. 
For each $k\in[1:n_\mathrm{fit}]$ and each method, we train models over its hyperparameter grid on $\mathcal{T}_k^\mathrm{fit}$, select the best configuration based on validation performance on $\mathcal{T}_k^\mathrm{val}$, and evaluate on $\mathcal{T}_k^\mathrm{test}$. 
The grids are given in \Cref{tab:hyper_grids}. 
We use GCN \citep{kipf2017semi}, GAT \citep{velivckovic2018graph}, and SAGE \citep{hamilton2017inductive} due to their wide usage \citep{fey2019fast}. 
Although larger models may improve performance, the ranges of $\mathrm{\#params}, l_\mathrm{conv}, \rho^*$ are appropriate for the purpose of this study based on the selected configuration counts in Appendix \Cref{appendix:hyper_selection}. 
For each model, $n_\mathrm{hid}$ is chosen to achieve the closest match to the target $\mathrm{\#params}$. 
We set $\phi$ to the SiLU  \citep{ramachandran2017searching,hendrycks2023gaussian}, and use default $\mathrm{CONV}$ hyperparameters from the original papers. 
For LightGBM, we use the grid generated from the best configurations reported in its original paper appendix \citep{ke2017lightgbm}, set remaining parameters to defaults in \citep{lightgbm,lightgbm_pypi}, and apply early stopping as in \citep{ke2017lightgbm}.
In total, 8,790 models are trained across ten periods: 540 each for our method, GNN, and \ref{ablation:a2}; 2,010 for ridge; 60 each for MLP and \ref{ablation:a1}; and 5,040 for LightGBM.

% 10 3 2 3 3 = 540

% 10 201 = 2010

% 10 2 3 = 60

% 10 3 2 2 2 3 7 = 5040

\begin{figure}
    \centering
    \includegraphics[width=1.0\linewidth]{01.images/prop-satisfaction.png}
    \caption{1-Year Rolling Average Rate of Satisfaction of Conditions (ii) in \Cref{prop:var,prop:low_delta}}
    \label{fig:prop-satisfaction}
\end{figure}

\begin{figure}
    \centering
    \includegraphics[width=1.0\linewidth]{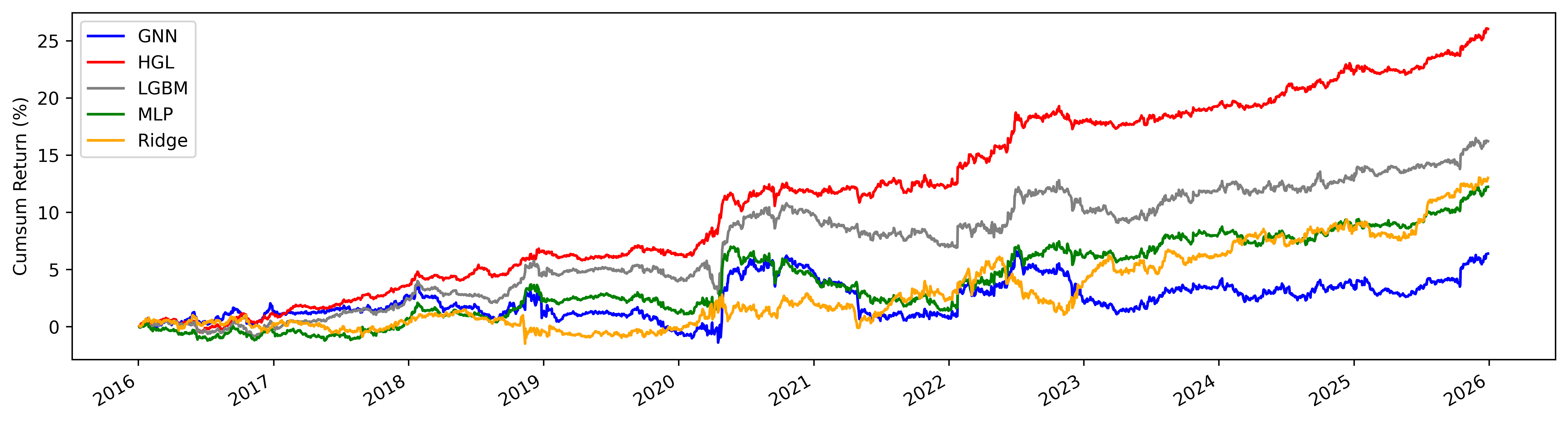}
    \caption{Cumulative Sum of Profit and Loss, under the trading scheme in which each day half of the portfolio weight is long and the other half is short. The jumps in 2020 and 2022 may be associated with COVID-19 and the Russo-Ukrainian war, respectively.}
    \label{fig:pnl}
\end{figure}

\begin{table}
\caption{
Trading Performance Across Methods 
(IR: information ratio; SR: Sortino ratio; Ret(\%): average daily return (\%); Vol(\%): daily return volatility (\%); Hit: hit ratio; MDD: maximum drawdown (absolute); Tvr: average daily turnover; Cor: correlation with the S\&P 500). 
Under typical margin ratios, the actual daily returns (except for the S\&P 500) can exceed the Ret values reported below by at least a factor of $8.33 = 1/0.12$. For example, the actual HGL daily return can exceed 0.08634\%. 
This is because the Ret values below are reported under the unit-margin assumption in \Cref{subsec:notation}, whereas typical margin ratios in practice are 3--12\% \citep{CMEFuturesMargins} and even lower for spreads \citep{CMEFuturesSpreads}.}
\label{tab:metrics}
\begin{tabular}{llrrrrrrrr}
\toprule
 &  & IR & SR & Ret (\%) & Vol (\%) & MDD & Hit & Tvr & Cor \\
Strategy & Method &  &  &  &  &  &  &  &  \\
\midrule
\multirow[t]{7}{*}{CS} & HGL & \textbf{0.08459} & \textbf{0.12409} & \textbf{0.01036} & 0.12249 & \textbf{0.02002} & \textbf{0.53959} & 0.55726 & -0.01850 \\
 & Ridge & 0.03299 & 0.04306 & 0.00517 & 0.15685 & 0.05074 & 0.50696 & 0.62866 & 0.04576 \\
 & MLP & 0.03233 & 0.04963 & 0.00487 & 0.15050 & 0.05609 & 0.51134 & 0.51328 & 0.02198 \\
 & LGBM & 0.04324 & 0.06834 & 0.00645 & 0.14922 & 0.03893 & 0.51612 & 0.65431 & 0.03262 \\
 & GNN & 0.01492 & 0.02131 & 0.00254 & 0.17031 & 0.05710 & 0.50378 & 0.59353 & 0.04165 \\
 & \ref{ablation:a1} & 0.07583 & 0.11027 & 0.00827 & \textbf{0.10904} & 0.02367 & 0.53442 & \textbf{0.53363} & \textbf{-0.03402} \\
 & \ref{ablation:a2} & 0.04420 & 0.06526 & 0.00609 & 0.13789 & 0.03028 & 0.52129 & 0.53653 & 0.01235 \\
\cline{1-10}
\multirow[t]{6}{*}{LO} & \ref{ablation:bm1} & 0.03266 & 0.04398 & 0.02605 & 0.79765 & 0.34652 & 0.53124 & 0.04710 & 0.30209 \\
 & \ref{ablation:bm2} & 0.03606 & 0.04927 & 0.02785 & \textbf{0.77239} & 0.31870 & 0.53004 & 0.27418 & 0.29924 \\
 & \ref{ablation:bm3} & 0.03825 & 0.05234 & 0.02994 & 0.78279 & \textbf{0.31691} & 0.52845 & 0.24609 & \textbf{0.29897} \\
 & \ref{ablation:bm4} & 0.03691 & 0.05018 & 0.02944 & 0.79753 & 0.33267 & 0.53084 & 0.21805 & 0.30171 \\
 & EW & 0.03224 & 0.04337 & 0.02576 & 0.79910 & 0.34957 & 0.53124 & \textbf{0.03947} & 0.30227 \\
 & S\&P 500 & \textbf{0.04836} & \textbf{0.05815} & \textbf{0.05525} & 1.14240 & 0.38249 & \textbf{0.54715} &  & 1.00000 \\
\bottomrule
\end{tabular}
\end{table}

\begin{table}[t]
\caption{MSE of Prediction Models. \textnormal{The first row indicates the model selected from GAT, GCN, and SAGE based on validation performance for each period.}}
\label{tab:mse}
\begin{tabular}{lrlr}
\toprule
Model & MSE & Model & MSE \\
\midrule
HGL & \textbf{1.125511} & GNN & 1.126562 \\
HGL-GAT & 1.125746 & GNN-GAT & 1.126513 \\
HGL-GCN & 1.125513 & GNN-GCN & 1.126587 \\
HGL-SAGE & 1.125342 & GNN-SAGE & 1.126562 \\
\midrule
LGBM & \textbf{1.126221} & Ridge & 1.127108 \\
MLP & 1.126497 &  &  \\
\bottomrule
\end{tabular}
\end{table}

\begin{table}
\caption{P-values from Paired Statistical Tests of MSE Differences between HGL and Benchmark Models. 
The first two tests assess the significance of the differences, while the last two columns evaluate the underlying assumptions: the Shapiro--Wilk test assesses the normality of the paired differences (for the t-test), and the symmetry test \citep{miao2006new} examines distributional symmetry (for the Wilcoxon test).}
\label{tab:pvals_model_tests}
\begin{tabular}{lrrrr}
\toprule
 & Paired T & Wilcoxon Signed Rank & Shapiro-Wilk & Symmetry \\
\midrule
GNN & 6.27e-05 & \textbf{8.14e-05} & 7.81e-10 & 8.74e-01 \\
GNN-GAT & \textbf{1.68e-04} & 2.95e-05 & 1.59e-02 & 9.99e-01 \\
GNN-GCN & 1.96e-05 & \textbf{4.88e-05} & 3.34e-05 & 9.99e-01 \\
GNN-SAGE & 6.27e-05 & \textbf{8.14e-05} & 7.81e-10 & 8.74e-01 \\
MLP & 6.22e-05 & \textbf{2.29e-05} & 6.80e-03 & 2.13e-01 \\
Ridge & \textbf{8.01e-09} & 5.52e-08 & 1.42e-01 & 2.43e-01 \\
LGBM & \textbf{2.14e-03} & 1.36e-03 & 1.32e-01 & 9.13e-01 \\
\bottomrule
\end{tabular}
\end{table}

\textbf{Experimental results} are summarized in \Cref{tab:mse,tab:pvals_model_tests,tab:metrics} and \Cref{fig:prop-satisfaction,fig:pnl}. 
HGL denotes our hierarchical graph learning method, with HGL-GAT/GCN/SAGE specifying $\mathrm{CONV}=\mathrm{GAT}/\mathrm{GCN}/\mathrm{SAGE}$; GNN-GAT/GCN/SAGE are defined analogously. 
HGL and GNN are the best convolution methods selected from GAT, GCN, and SAGE per period based on validation performance. 
In \Cref{tab:metrics} and \Cref{fig:pnl}, the daily CS position vectors are computed via \cref{eq:group-neutralized-position-vector1,eq:group-neutralized-position-vector2,eq:group-neutralized-position-vector3} where $\hat Y_{tcd}$ are the corresponding model predictions. 
In \Cref{tab:metrics}, the daily LO position vectors (except for the equal-weight (EW) strategy and S\&P 500) are computed from the predictions or CS weights of HGL. 
The EW strategy assigns equal positive weights across contracts. 
In \Cref{tab:metrics}, IR and SR measure risk-adjusted return; Vol and MDD measure risk; Tvr proxies transaction costs; and Cor measures diversification benefits relative to the S\&P 500 (lower is better).\footnote{Additionally, yearly metrics are provided in Appendix \Cref{appendix:annual}.}  
HGL achieves the highest risk-adjusted returns, measured by IR and SR, indicating the strongest statistical arbitrage performance.\footnote{Investment objectives in finance typically focus on maximizing risk-adjusted returns \citep{sharpe1966mutual,sharpe1998sharpe}.} 
We address each of \cref{rq:assumptions,rq:prediction,rq:architecture} based on these results.

\ref{rq:assumptions}
We find strong empirical support for the effectiveness of CS strategies over LO strategies for statistical arbitrage, as the assumptions in \Cref{prop:var,prop:low_delta} hold with high probability.  
Specifically, on average, 81.109\% of $(t,c)$ pairs in our trading universe satisfy both \cref{eq:test1,eq:test2} associated with \Cref{prop:var}, and 99.976\% satisfy \cref{eq:test3} for \Cref{prop:low_delta}.  
In \Cref{fig:prop-satisfaction}, the conditions in \Cref{prop:var,prop:low_delta} hold for at least 70\% and 99\%, respectively, indicating that CS strategies exhibit lower variance and delta in a large fraction of cases. 
% lower variance and delta for CS strategies in a large fraction of cases.  
Moreover, by \Cref{prop:ir} (ii), if an LO strategy achieves sufficiently high expected returns as characterized therein, then the demeaned CS strategy can outperform the LO strategy, as 
$\operatorname{Var}(\Pi(\mathbf{w}))
>\operatorname{Var}(\Pi(\check{\mathbf{w}}))$
holds in a large fraction of cases. 
Note that these percentages do not imply CS underperformance otherwise; rather, they indicate that theoretical superiority is not established in those cases.

Furthermore, \Cref{tab:metrics} provides an empirical example of the effectiveness of CS trading. 
HGL achieves more than twice the IR and SR, more than six times lower volatility, and more than  fifteen times lower MDD than the LO baselines \cref{ablation:bm1,ablation:bm2,ablation:bm3,ablation:bm4} and the EW strategy. 
These results are consistent with the propositions, showing that CS strategies can deliver both lower risk and higher risk-adjusted return than LO strategies. 
Moreover, HGL outperforms the S\&P 500, with over 75\% higher IR, 113\% higher SR, 
nine times lower volatility, and nineteen times lower MDD, indicating strong investment potential. 
It also exhibits near-zero correlation with the S\&P 500, implying substantial diversification benefits  when invested alongside traditional equities. 
At typical margin ratios employed by CME Group \citep{CMEFuturesMargins,CMEFuturesSpreads}, HGL's daily return is 188\% and 56\% higher than those of the LO baselines and the S\&P 500, respectively.

\ref{rq:prediction}
From the outperformance of HGL in \Cref{fig:pnl,tab:metrics,tab:mse,tab:pvals_model_tests}, we find that TTM-dependent interrelationships between futures contracts are instrumental for both trading and prediction. 
The cumulative sum P\&L of HGL is stable and upward-sloping in \Cref{fig:pnl}, compared to the other benchmarks. 
In \Cref{tab:metrics}, HGL consistently outperforms the non-graph CS benchmarks (Ridge, MLP, LGBM, and GNN) across all the key metrics: higher IR, SR, daily return, and hit ratio; and lower volatility and MDD. 
These results stem from the better prediction of HGL. 
All HGL models achieve lower MSEs (\Cref{tab:mse}) than the benchmarks, with differences statistically significant at the 0.01 level (\Cref{tab:pvals_model_tests}) based on paired tests listed in \citealp{hong2025graph} using daily MSEs. 
% \red{Additional tests for HGNNs with specific convolutions confirm consistent outperformance (Appendix \Cref{appendix:stat_tests}). }
Since HGL outperforms the non-graph methods, futures interrelationships are informative; since it also outperforms GNNs, TTM dependence is important for capturing these relationships.

\ref{rq:architecture}
We confirm that both inter- and intra-commodity relationships are instrumental. 
In \Cref{tab:metrics}, although \ref{ablation:a1} performs better than \ref{ablation:a2}, neither achieves IR and SR comparable to HGL, indicating that both types of relationships are informative. 
Hence, both are important for extracting information from futures contract relationships.

\textbf{Future work.}
Although our method generates statistical arbitrage and demonstrates the effectiveness of CS strategies and the importance of TTM-dependent futures interrelationships, there remains room for future research. 
Due to limited and contract-specific data availability, this study does not account for margin ratios or transaction costs; incorporating these factors is an important direction. 
While HGL captures the TTM-dependent interrelationship information, the exact nature of the extracted information remains unclear due to the black-box nature of deep learning methods, motivating further study. 
Although CS strategies can outperform LO strategies, their combination may yield additional gains under certain conditions, which presents an interesting direction for further study.

\section{Conclusion}
This paper proposes a hierarchical graph learning approach for CS strategies in commodity futures markets, bridging two key gaps in the machine-learning literature: (i) the absence of learning-based methods for CS strategies in futures markets, and (ii) the lack of consideration for maturity-dependent interrelationships among commodity futures. 
We establish the efficacy of CS strategies by proving that they can exhibit higher risk-adjusted returns and lower risk than long-only strategies.
We also present a method to transform predictions into CS positions. 
Next, we develop a hierarchical graph learning method that predicts futures price movements by leveraging TTM-dependent interrelationships among futures, thereby yielding a trading algorithm.
The experiments demonstrate that the hierarchical graph learning and resulting CS positions outperform benchmarks in prediction and trading, respectively. 
We find that maturity-dependent interrelationships across commodity futures are important and that CS trading based on  hierarchical graph learning is effective for statistical arbitrage.

\newpage

\section*{Declaration on the Use of Generative AI and AI-Assisted Technologies in Manuscript Preparation}
In preparing this work, the authors used ChatGPT to assist with English-language editing, such as grammar correction and translation.

\bibliographystyle{cas-model2-names}

% Loading bibliography database
\bibliography{00.sections/98.references}

\appendix

\newpage
\renewcommand{\theequation}{\Alph{section}.\arabic{equation}}
\renewcommand{\thefigure}{\Alph{section}.\arabic{figure}}
\renewcommand{\thetable}{\Alph{section}.\arabic{table}}

\setcounter{equation}{0}
\setcounter{figure}{0}
\setcounter{table}{0}
\section{Proofs of Propositions}
\label{appendix:proofs}

\subsection{Proof of Proposition~\ref{prop:var}}
\label{proof:var2}
\begin{proof}
Fix $t\in\mathbb{N}$ and $c \in \hat C_t$. 
We suppress the subscripts $(t,c)$ and write $w_{d}=w_{tcd}$, $\sigma_{dd'}=\sigma_{tcdd'}$, $\rho_{dd'}=\rho_{tcdd'}$, $\hat n=\hat n_{tc}$,  $\kappa=\kappa_{tc}$, $R_d = R_{tcd}$, $\hat D= \hat D_{tc}$,  $\mathcal W_{\mathrm{LO}} = \mathcal W_{tc,\mathrm{LO}}$, and $\mathcal W_{\mathrm{CS}} = \mathcal W_{tc,\mathrm{CS}}$. 
Let $\sigma_d=\sqrt{\sigma_{dd}}$, 
$\sigma_{\min}=\min_{d\in \hat D} \sigma_d$, 
$\sigma_{\max}=\max_{d\in \hat D} \sigma_d$, 
and $\rho^*=\min_{d,d'\in \hat D}\rho_{dd'}$. 
Let $\mathbf{w}=[w_{d}]_{d\in \hat D}\in \mathcal W_{\mathrm{LO}}$  and
$\mathbf{w}'=[w'_{d}]_{d\in \hat D}\in \mathcal W_{\mathrm{CS}}$ be arbitrary.

(i) Assume that $\rho^* > ({\kappa^2-2\norm{\mathbf{w}}_2^2})({3-2\norm{\mathbf{w}}_2^2})^{-1}$ and $\rho^* \geq 0$.  
We obtain \cref{eq:prop-var-1-lo1} from \cref{eq:Gaam_def} and \cref{eq:prop-var-1-lo2} from the definition of $\rho^*$. 
Pulling out $\rho^*\sum_{d\in \hat D} w_{d}^2 \sigma_d^2$ from the first term of \cref{eq:prop-var-1-lo2} and adding it to the second term of \cref{eq:prop-var-1-lo2}, we have \cref{eq:prop-var-1-lo3} since $\rho^*\sum_{d,d'\in \hat D} w_d w_{d'} \sigma_d \sigma_{d'}=\rho^*(\sum_{d\in \hat D} w_{d} \sigma_d)^2$. 
Since $\sigma_d^2 \geq \sigma_{\min}^2$ for all $d\in \hat D$, and $\mathbf{1}^\top \mathbf{w}=1$, we obtain \cref{eq:prop-var-1-lo4}. 
\begin{align}
\operatorname{Var}(\Pi(\mathbf{w}))&=
\sum_{d,d' \in  \hat D} w_{d} w_{d'} \sigma_{dd'}=\sum_{d\in \hat D} w_{d}^2 \sigma_d^2+\sum_{d,d' \in  \hat D:d\neq d'} w_d w_{d'} \sigma_d \sigma_{d'} \rho_{dd'} \label{eq:prop-var-1-lo1}
\\ &\geq \sum_{d\in \hat D} w_{d}^2 \sigma_d^2 + \rho^*\sum_{d,d'\in \hat D:d\neq d'} w_d w_{d'} \sigma_d \sigma_{d'} \label{eq:prop-var-1-lo2}   
\\ &= (1-\rho^*)\sum_{d\in \hat D} w_{d}^2 \sigma_d^2 + \rho^*(\sum_{d\in \hat D} w_{d} \sigma_d)^2 \label{eq:prop-var-1-lo3}   
\\ & \geq \sigma_{\min}^2 \left((1-\rho^*) \norm{\mathbf{w}}^2_2 +\rho^* \right)
\label{eq:prop-var-1-lo4}   
\end{align}

We obtain \cref{eq:prop-var-1-cs1} from \cref{eq:Gaam_def}. 
We have \cref{eq:prop-var-1-cs2} since $w'_{d}=[w'_{d}]^+ - [-w'_{d}]^+$. 
Since 
$\sigma_d \sigma_{d'} \rho^* \leq \sigma_d \sigma_{d'} \rho_{dd'} = \sigma_{dd'} \leq \sigma_d \sigma_{d'}$, we obtain \cref{eq:prop-var-1-cs4}. 
Since $\mathbf{w}'$ is a CS position weight vector, we have $\sum_{d\in \hat D} [w'_{d}]^+=\sum_{d\in \hat D} [-w'_{d}]^+=1/2$ from \cref{eq:symmetric_weight}. 
Then, we have $\sigma_{\max}^2/4 = \sigma_{\max}^2 ( \sum_{d \in  \hat D}  [w'_{d}]^+ )^2 \geq ( \sum_{d \in  \hat D} \sigma_d [w'_{d}]^+ )^2$ since $\sigma_d\leq \sigma_{\max}$, which yields the upper bound of the first term of \cref{eq:prop-var-1-cs5}. 
Analogously, we obtain $\sigma_{\max}^2/4 \geq ( \sum_{d \in  \hat D} \sigma_d [-w'_{d}]^+ )^2$, which yields the upper bound of the second term of \cref{eq:prop-var-1-cs5}.  
Therefore, we obtain \cref{eq:prop-var-1-cs6} since $\sum_{d \in  \hat D} \sigma_d [w'_{d}]^+ \geq \sigma_{\min} \sum_{d \in  \hat D}  [w'_{d}]^+=\sigma_{\min}/2$ and $\sum_{d \in  \hat D} \sigma_d [-w'_{d}]^+ \geq \sigma_{\min}/2$.

\begin{align}
\operatorname{Var}(\Pi(\mathbf{w}'))
&= \sum_{d,d' \in  \hat D} w'_{d} w'_{d'} \sigma_{dd'} \label{eq:prop-var-1-cs1}
\\&= \sum_{d,d' \in  \hat D} \sigma_{dd'}
\big([w'_{d}]^+ - [-w'_{d}]^+\big)
\big([w'_{d'}]^+ - [-w'_{d'}]^+\big) \label{eq:prop-var-1-cs2}
\\&= \sum_{d,d' \in  \hat D} \sigma_{dd'} [w'_{d}]^+ [w'_{d'}]^+
+ \sum_{d,d' \in  \hat D} \sigma_{dd'} [-w'_{d}]^+ [-w'_{d'}]^+ 
- 2 \sum_{d,d' \in  \hat D} \sigma_{dd'} [w'_{d}]^+ [-w'_{d'}]^+ \label{eq:prop-var-1-cs3}
\\&\le
\sum_{d,d' \in  \hat D} \sigma_d \sigma_{d'} [w'_{d}]^+ [w'_{d'}]^+
+ \sum_{d,d' \in  \hat D} \sigma_d \sigma_{d'} [-w'_{d}]^+ [-w'_{d'}]^+ 
- 2 \rho^*
\sum_{d,d' \in  \hat D} \sigma_d \sigma_{d'} [w'_{d}]^+ [-w'_{d'}]^+ 
\label{eq:prop-var-1-cs4}
\\&=
\left( \sum_{d \in  \hat D} \sigma_d [w'_{d}]^+ \right)^2
+
\left( \sum_{d \in  \hat D} \sigma_d [-w'_{d}]^+ \right)^2 
- 2 \rho^*
\left( \sum_{d \in  \hat D} \sigma_d [w'_{d}]^+ \right)
\left( \sum_{d \in  \hat D} \sigma_d [-w'_{d}]^+ \right) \label{eq:prop-var-1-cs5}
\\&\leq
\sigma_{\max}^2 (\frac{1}{4}+\frac{1}{4})-2\rho^* \cdot \frac{\sigma_{\min}}{2} \cdot \frac{\sigma_{\min}}{2}= \frac{1}{2} (\sigma_{\max}^2-\rho^*\sigma_{\min}^2) \label{eq:prop-var-1-cs6}
\end{align}

Since $\mathbf{1}^\top \mathbf{w}=1$ and $w_d \geq 0$, we have 
\begin{equation}
\norm{\mathbf{w}}_2^2 \leq (\mathbf{1}^\top \mathbf{w})^2 = 1 \label{eq:prop-var-1-w-bd}    
\end{equation}
and thus $3-2\norm{\mathbf{w}}_2^2>0$. 
From this, since $\rho^* > ({\kappa^2-2\norm{\mathbf{w}}_2^2})({3-2\norm{\mathbf{w}}_2^2})^{-1}$, we obtain 
\begin{align}
\rho^*(3-2\norm{\mathbf{w}}_2^2)&>\kappa^2-2\norm{\mathbf{w}}_2^2, \label{eq:prop-var-1-bd1}
\\ 
2 \left(\norm{\mathbf{w}}_2^2+\rho^*(1-\norm{\mathbf{w}}_2^2)\right) &> \kappa^2-\rho^*, \label{eq:prop-var-1-bd2}
\\
\sigma_{\min}^2 \left((1-\rho^*) \norm{\mathbf{w}}^2_2 +\rho^* \right) &> \frac{1}{2} (\sigma_{\max}^2-\rho^*\sigma_{\min}^2), \label{eq:prop-var-1-bd3}
\end{align}
where the last inequality follows from the definition of $\kappa^2$. 
From \cref{eq:prop-var-1-lo4,eq:prop-var-1-cs6,eq:prop-var-1-bd3}, we obtain $\operatorname{Var}(\Pi(\mathbf{w})) > \operatorname{Var}(\Pi(\mathbf{w}'))$.

(ii) Assume that $\min_{d,d'\in \hat D}\rho_{tcdd'} \geq 0$ and $\min_{d,d'\in \hat D}\rho_{tcdd'} > (\kappa^2 \hat n - 2)(3 \hat n-2)^{-1}$. 
We have $\kappa^2 < 3$ since $(\kappa^2 \hat n - 2)(3 \hat n-2)^{-1}=1+\frac{(\kappa^2-3)\hat n}{3 \hat n-2}$, $\hat n$ is natural, and every correlation coefficient does not exceed one. 
Since $\mathbf 1^\top \mathbf w =1$, 
by the Cauchy-Schwarz inequality and \cref{eq:prop-var-1-w-bd}, 
we obtain $1/\hat n \leq \norm{\mathbf{w}}_2^2 \leq 1$. 
We thus have
\begin{align}
    3-2\hat n^{-1} &\geq  3-2\norm{\mathbf{w}}_2^2, \\
    (\kappa^2 \hat n - 2)(3 \hat n -2)^{-1}=1+\frac{\kappa^2-3}{3-2 \hat n^{-1}}  &\geq \frac{\kappa^2-3}{3-2\norm{\mathbf{w}}_2^2}+1
    =({\kappa^2-2\norm{\mathbf{w}}_2^2})({3-2\norm{\mathbf{w}}_2^2})^{-1}. 
\end{align}
From this and the assumption, we obtain $\min_{d,d'\in \hat D}\rho_{tcdd'} > (\kappa^2 \hat n - 2)(3 \hat n-2)^{-1} \geq ({\kappa^2-2\norm{\mathbf{w}}_2^2})({3-2\norm{\mathbf{w}}_2^2})^{-1}$. 
From (i), we obtain $\operatorname{Var}(\Pi(\mathbf{w})) > \operatorname{Var}(\Pi(\mathbf{w}'))$. 
\end{proof}

\subsection{Proof of Proposition~\ref{prop:ir}}
\label{proof:ir2}
\begin{proof} 
Let $\mathbf{w}\in\mathcal{W}_{tc,\mathrm{LO}}$, $\bar{\mathbf{w}}=\mathbf 1/\hat n_{tc} \in\mathcal{W}_{tc,\mathrm{LO}}$, and $\check{\mathbf{w}}=(\mathbf{w}-\bar{\mathbf{w}})/\norm{\mathbf{w}-\bar{\mathbf{w}}}_1\in\mathcal{W}_{tc,\mathrm{CS}}$ with $\mathbf{w}\neq \bar{\mathbf{w}}$. 
Since both expectation and \cref{eq:Gaam_def} are linear, we have
\begin{align}
&\mathbb{E}\Pi(\mathbf{w}-\bar{\mathbf{w}})=\mathbb{E}\Pi(\mathbf{w})-\mathbb{E}\Pi(\bar{\mathbf{w}}), \label{eq:prop-e-w-check0}\\
&\mathbb E \Pi(\check{\mathbf w})
=\frac{\mathbb{E}\Pi(\mathbf{w}-\bar{\mathbf{w}})}{\norm{\mathbf{w}-\bar{\mathbf{w}}}_1}
=\frac{\mathbb{E}\Pi(\mathbf{w})-\mathbb{E}\Pi(\bar{\mathbf{w}})}{\norm{\mathbf{w}-\bar{\mathbf{w}}}_1}. \label{eq:prop-e-w-check}
\end{align}
Due to the linearity of \cref{eq:Gaam_def} and the property of variance, we have
\begin{equation}
\operatorname{Var}(\Pi(\check{\mathbf{w}}))
=\operatorname{Var}\left(
    \frac{\Pi (\mathbf{w}-\bar{\mathbf{w}})}{\norm{\mathbf{w}-\bar{\mathbf{w}}}_1}
    \right)
=\frac{\operatorname{Var}(\Pi(\mathbf{w}-\bar{\mathbf{w}}))}{\norm{\mathbf{w}-\bar{\mathbf{w}}}_1^2}.    
\label{eq:prop-var-w-check}
\end{equation} 
From \cref{eq:prop-e-w-check,eq:prop-var-w-check}, since $\operatorname{Var}(\Pi(\mathbf{w}-\bar{\mathbf{w}})) > 0$, we have
\begin{equation}
\operatorname{IR}(\Pi(\check{\mathbf{w}}))
=\frac{\mathbb{E}\Pi(\check{\mathbf{w}})}{\sqrt{\operatorname{Var}(\Pi(\check{\mathbf{w}}))}}
=\frac{\mathbb{E}[\Pi(\mathbf{w}-\bar{\mathbf{w}})]/\norm{\mathbf{w}-\bar{\mathbf{w}}}_1}{
\sqrt{\operatorname{Var}(\Pi(\mathbf{w}-\bar{\mathbf{w}}))/\norm{\mathbf{w}-\bar{\mathbf{w}}}_1^2}}
=\frac{\mathbb{E}\Pi(\mathbf{w}-\bar{\mathbf{w}})}{\sqrt{\operatorname{Var}(\Pi(\mathbf{w}-\bar{\mathbf{w}}))}}
=\operatorname{IR}(\Pi(\mathbf{w}-\bar{\mathbf{w}})). \label{eq:prop-ir-w-check}  
\end{equation}

(i) Assume that \cref{eq:prop-ir-cond1} holds, $\operatorname{Var}(\Pi(\mathbf{w}-\bar{\mathbf{w}})) > 0$, and $\operatorname{Var}(\Pi(\mathbf{w})) > 0$.  
From \cref{eq:prop-ir-cond1,eq:prop-e-w-check0}, we obtain
$\operatorname{IR}(\Pi(\mathbf{w}-\bar{\mathbf{w}}))
=
\frac{\mathbb{E}\Pi(\mathbf{w}-\bar{\mathbf{w}})}{\sqrt{\operatorname{Var}(\Pi(\mathbf{w}-\bar{\mathbf{w}}))}}
>
\frac{\red{[\mathbb{E}\Pi(\mathbf{w})]^+}}{\sqrt{\operatorname{Var}(\Pi(\mathbf{w})}}
=\red{[\operatorname{IR}(\Pi(\mathbf w))]^+}$, 
yielding $\operatorname{IR}(\Pi(\check{\mathbf{w}}))>\red{[\operatorname{IR}(\Pi(\mathbf{w}))]^+}$ due to \cref{eq:prop-ir-w-check}.

(ii) 
Assume that \cref{eq:prop-ir-cond2} holds, $\operatorname{Var}(\Pi(\mathbf{w}-\bar{\mathbf{w}})) > 0$, and $\operatorname{Var}(\Pi(\mathbf{w}))
>\operatorname{Var}(\Pi(\check{\mathbf{w}}))$.   
From \cref{eq:prop-ir-w-check}, we have the first two equalities of \cref{eq:proof-var-2-final}. 
By the assumption and \cref{eq:prop-var-w-check}, 
we obtain 
$\operatorname{Var}(\Pi(\mathbf{w}))
>\operatorname{Var}(\Pi(\check{\mathbf{w}}))
=
\frac{\operatorname{Var}(\Pi(\mathbf{w}-\bar{\mathbf{w}}))}{\norm{\mathbf{w}-\bar{\mathbf{w}}}_1^2}, $
which yields the inequality between the third and fourth terms in \cref{eq:proof-var-2-final} since $\mathbb{E}[\Pi(\mathbf{w}-\bar{\mathbf{w}})]>0$ by \cref{eq:prop-ir-cond2,eq:prop-e-w-check0}. 
By \cref{eq:prop-ir-cond2,eq:prop-e-w-check0}, we have 
$\mathbb{E} \Pi(\mathbf{w}-\bar{\mathbf{w}})
>
\norm{\mathbf{w}-\bar{\mathbf{w}}}_1 \red{[\mathbb{E}\Pi(\mathbf{w})]^+}$, yielding 
the inequality between the fourth and fifth terms in \cref{eq:proof-var-2-final}. 
From the following, we have $\operatorname{IR}(\Pi(\check{\mathbf{w}}))>\red{[\operatorname{IR}(\Pi(\mathbf{w}))]^+}$: 
\begin{equation}
 \operatorname{IR}(\Pi(\check{\mathbf{w}}))
=\operatorname{IR}(\Pi(\mathbf{w}-\bar{\mathbf{w}}))
=\frac{\mathbb{E}[\Pi(\mathbf{w}-\bar{\mathbf{w}})]}{\sqrt{\operatorname{Var}(\Pi(\mathbf{w}-\bar{\mathbf{w}}))}}
>\frac{\mathbb{E}[\Pi(\mathbf{w}-\bar{\mathbf{w}})]}{\norm{\mathbf{w}-\bar{\mathbf{w}}}_1 \sqrt{\operatorname{Var}(\Pi(\mathbf{w}))}}
>
\frac{\norm{\mathbf{w}-\bar{\mathbf{w}}}_1 \red{[\mathbb{E}\Pi(\mathbf{w})]^+}}{\norm{\mathbf{w}-\bar{\mathbf{w}}}_1 \sqrt{\operatorname{Var}(\Pi(\mathbf{w}))}}
=\red{[\operatorname{IR}(\Pi(\mathbf{w}))]^+}.   
\label{eq:proof-var-2-final}
\end{equation}

(iii)
We assume that \cref{eq:prop-ir-cond0} holds, $\operatorname{Var}(\Pi(\mathbf{w}-\bar{\mathbf{w}})) = 0$, and $\operatorname{Var}(\Pi(\mathbf{w})) > 0$. 
We have $\mathbb E \Pi (\check{\mathbf w})>0$ by \cref{eq:prop-ir-cond0,eq:prop-e-w-check} and $\operatorname{Var}(\Pi(\check{\mathbf{w}}))=0$ by \cref{eq:prop-var-w-check} and the zero-variance assumption. 
Let $\mathbf{w}=[w_{cd}]_{d \in \hat D_{tc}}$. 
From \cref{eq:Gaam_def}, 
$\mathbb E\Pi(\mathbf w) 
= \mathbb E \sum_{d\in \hat D_{tc}} w_{cd }R_{tcd}
< \sum_{d\in \hat D_{tc}} \mathbb E R_{tcd} < \infty$,
where the first inequality is because $\|\mathbf{w}\|_1=1$ and $w_{cd}\ge0$,
and the last inequality follows from \cref{eq:finite_return}. 
Thus, we obtain $\operatorname{IR}(\Pi(\mathbf w))<\infty$ since $\operatorname{Var}(\Pi(\mathbf{w})) > 0$.

\end{proof}

\subsection{Proof of Proposition~\ref{prop:low_delta}}
\label{proof:low_delta}
\begin{proof}
Fix $t\in\mathbb{N}$ and $c\in \hat C_t$. 

\textnormal{(i)} 
Let $\mathbf{w}\in\mathcal{W}_{tc,\mathrm{LO}}$, 
$\bar{\mathbf{w}}=\mathbf 1/\hat n_{tc} \in\mathcal{W}_{tc,\mathrm{LO}}$, and $\check{\mathbf{w}}=(\mathbf{w}-\bar{\mathbf{w}})/\norm{\mathbf{w}-\bar{\mathbf{w}}}_1\in\mathcal{W}_{tc,\mathrm{CS}}$ with $\mathbf{w}\neq \bar{\mathbf{w}}$.  
Assume that  $\Delta(\bar{\mathbf{w}})>(1-\norm{\mathbf{w}-\bar{\mathbf{w}}}_1)\Delta(\mathbf{w})$. 
Then, we have $\Delta(\mathbf{w})>\frac{\Delta(\mathbf{w})-\Delta(\bar{\mathbf{w}})}{\norm{\mathbf{w}-\bar{\mathbf{w}}}_1}$. 
Since $\Delta(\cdot)$ is linear by its definition \cref{eq:def_port_delta}, $\frac{\Delta(\mathbf{w})-\Delta(\bar{\mathbf{w}})}{\norm{\mathbf{w}-\bar{\mathbf{w}}}_1}=\Delta(\frac{\mathbf{w}-\bar{\mathbf{w}}}{\norm{\mathbf{w}-\bar{\mathbf{w}}}_1})=\Delta(\check{\mathbf{w}})$. 
Thus, we have $\Delta(\mathbf{w})>\Delta(\check{\mathbf{w}})$. 

\textnormal{(ii)} 
For each $d\in \hat D_{tc}$, we have 
\begin{align}
 \Delta_{t-1,cd} & \ge \operatorname*{ess\,inf}\Delta_{t-1,cd}  \ge \Delta^{\min}_{t-1,c} \quad \text{a.s.},    \label{eq:delta_inf_min} \\
 \Delta_{t-1,cd} & \le \operatorname*{ess\,sup}\Delta_{t-1,cd}  \le \Delta^{\max}_{t-1,c} \quad \text{a.s.}    \label{eq:delta_sup_max}
\end{align}
Fix any $\mathbf{w}=[w_d]_{d \in \hat D_{tc}}\in\mathcal{W}_{tc,\mathrm{LO}}$. 
From \cref{eq:def_port_delta}, $\Delta(\mathbf{w})
=\sum_{d\in \hat D_{tc}} w_d \Delta_{t-1,cd}
\ge \sum_{d\in \hat D_{tc}} w_d\Delta^{\min}_{t-1,c}
=\Delta^{\min}_{t-1,c}$ a.s., 
where the inequality follows from \cref{eq:delta_inf_min}, 
and the last equality is because $\mathbf{w}\in\mathcal{W}_{tc,\mathrm{LO}}$. 
Therefore, we have 
\begin{equation}
    \operatorname*{ess\,inf}_{\mathbf{w}\in\mathcal{W}_{tc,\mathrm{LO}}}\Delta(\mathbf{w})\ge \Delta^{\min}_{t-1,c}. \label{eq:lo_delta_lb}
\end{equation}

Redefine $\mathbf{w}=[w_d]_{d \in \hat D_{tc}}\in\mathcal{W}_{tc,\mathrm{CS}}$. 
From \cref{eq:def_port_delta}, we obtain 
\begin{align*}
 \Delta(\mathbf{w})
&=\sum_{d\in \hat D_{tc}} [w_d]^+\Delta_{t-1,cd} -\sum_{d\in \hat D_{tc}} [-w_d]^+\Delta_{t-1,cd} \\
&\le \sum_{d\in \hat D_{tc}} [w_d]^+ \Delta^{\max}_{t-1,c}-\sum_{d\in \hat D_{tc}} [-w_d]^+\Delta^{\min}_{t-1,c} \quad\text{a.s.}\\
&=\tfrac12\Delta^{\max}_{t-1,c}-\tfrac12\Delta^{\min}_{t-1,c} \quad\text{a.s.},    
\end{align*}
where the first equality is because $w_d=[w_d]^+-[-w_d]^+$,
the inequality is due to \cref{eq:delta_inf_min,eq:delta_sup_max},
and the last equality is due to \cref{eq:symmetric_weight}.
Hence, we have 
\begin{equation}\label{eq:cs_delta_ub}
\operatorname*{ess\,sup}_{\mathbf{w}\in\mathcal{W}_{\mathrm{tc,CS}}}\Delta(\mathbf{w})
\le \tfrac{1}{2}({\Delta^{\max}_{t-1,c}-\Delta^{\min}_{t-1,c}}).
\end{equation}
If $3\Delta_{t-1,c}^{\min}>\Delta_{t-1,c}^{\max}$, then $\Delta_{t-1,c}^{\min}>\tfrac{1}{2}({\Delta_{t-1,c}^{\max}-\Delta_{t-1,c}^{\min}})$, yielding $\operatorname*{ess\,inf}_{\mathbf{w}\in\mathcal{W}_{tc,\mathrm{LO}}}\Delta(\mathbf{w})
>
\operatorname*{ess\,sup}_{\mathbf{w}\in\mathcal{W}_{tc,\mathrm{CS}}}\Delta(\mathbf{w})$
due to \cref{eq:lo_delta_lb,eq:cs_delta_ub}. 
\end{proof}

\setcounter{equation}{0}
\setcounter{figure}{0}
\setcounter{table}{0}
\section{CS Weight Projection via Group Demeaning} 
\label{appendix:group_neutralization}

% \red{$\hat{\mathbf{Y}}'_{t} \leftarrow \hat{\mathbf Y}^\circ$}

% \red{$\hat{\mathbf Y}^\circ \leftarrow  \hat{\mathbf Y}^\circ_t$}

% \red{$\bar{\hat{Y}}_{tc} \leftarrow  \hat{Y}^\bullet_{tc}$}

Given a prediction vector $\hat{\mathbf{Y}}_{t}=[\hat Y_{tcd}]_{(c,d)\in \hat U_t}\in\mathbb{R}^{\hat{n}_t}$, we obtain a projected CS weight vector $\mathbf{w}_t$ by computing:
\begin{equation}
    \mathbf{w}_t = \hat{\mathbf Y}^\circ_t/\|\hat{\mathbf Y}^\circ_t\|_1 \label{eq:demean_app1}
\end{equation}
where 
\begin{align}
    \hat{\mathbf Y}^\circ_t&=[\hat{Y}_{tcd}- \hat{Y}^\bullet_{tc}]_{(c,d)\in\hat{U}_t},  \label{eq:demean_app2}\\
 \hat{Y}^\bullet_{tc}&=\frac{1}{\hat{n}_{tc}} \sum_{d' \in \hat{D}_{tc}}  \hat{Y}_{tcd'}. \label{eq:demean_app3}    
\end{align} 
\Cref{eq:demean_app1} corresponds to a scaling that enforces $\| \mathbf{w}_t \|_1=1$. 
From the proof below, \cref{eq:demean_app2,eq:demean_app3} yield the projection of $\hat{\mathbf{Y}}_{t}$ onto $\{\mathbf{w}_t\in\mathbb{R}^{\hat{n}_t}:\mathbf 1^\top \mathbf{w}_{tc}=0,~\forall c\in \hat{C}_t\}$. 
Therefore, $\mathbf{w}_t$ is, up to scaling, the CS weight vector closest to $\hat{\mathbf{Y}}_{t}$.

We now show that \cref{eq:demean_app2,eq:demean_app3} are the Euclidean projection of $\hat{\mathbf{Y}}_{t}$ onto $\{\mathbf{w}_t\in\mathbb{R}^{\hat{n}_t}:\cref{eq:def_cs_multi}\}$. 
The Euclidean projection is given by 
\begin{align} 
\hat{\mathbf{Y}}'_{t}&=(I_{\hat{n}_t}-G_t^\top (G_t G_t^\top)^{-1} 
G_t)\hat{\mathbf{Y}}_{t}  \label{eq:projection}
\end{align} 
where $I_{\hat{n}_t}\in\mathbb{R}^{\hat{n}_t\times \hat{n}_t}$ is the identity matrix, 
$G_{t}=[\mathbf{g}_{tc}]_{c\in \hat{C}_t}\in\mathbb{R}^{|\hat{C}_t|\times \hat{n}_t}$, and $\mathbf{g}_{tc}=[\mathbb{I}_{\{c'=c\}}]_{(c',d')\in \hat{U}_t}\in \mathbb{R}^{\hat{n}_t}$,  
because \cref{eq:def_cs_multi} is equivalent to $G_t \mathbf{w}_t = 0 $. 
Note that
\begin{equation}
     \hat{Y}^\bullet_{tc} = \frac{1}{\hat{n}_{tc}} \mathbf{g}_{tc}^\top  \hat{\mathbf{Y}}_t. \label{eq:avg_app}
\end{equation}
The $(c',c'')$ element of $G_t G_t^\top$ is $\mathbf{g}_{tc'}^\top\mathbf{g}_{tc''}=\hat{n}_{tc'} \mathbb{I}_{\{c'=c''\}}$. 
%, which is $\hat{n}_{tc'}$ if $c'=c''$ and zero otherwise. 
From this, we have $G_t G_t^\top=\mathrm{diag}_{c\in C_t}(\hat{n}_{tc})$ and  
\begin{align}
 & G_t^\top(G_t G_t^\top)^{-1} G_t \hat{\mathbf{Y}}_t\\
 &=[\mathbf{g}_{tc_1}~ \mathbf{g}_{tc_2}~ \dots]\cdot\mathrm{diag}_{c\in C_t}(1/\hat{n}_{tc}) \cdot [\mathbf{g}_{tc_1}^\top; \mathbf{g}_{tc_2}^\top; \dots] \hat{\mathbf{Y}}_t\\
 &=\sum_{c\in \hat{C}_t} \frac{1}{\hat{n}_{tc}} \mathbf{g}_{tc}\mathbf{g}_{tc}^\top  \hat{\mathbf{Y}}_t \\
 &=\sum_{c\in \hat{C}_t}  \hat{Y}^\bullet_{tc} \mathbf{g}_{tc} \in \mathbb{R}^{\hat n_t} \label{eq:mean_commodity_for_futures}
\end{align}
by \cref{eq:avg_app}, where $c_1 < c_2$ are the two smallest elements in $\hat{C}_t$. 
The component of \cref{eq:mean_commodity_for_futures} corresponding to $(c,d)$ is the commodity-wise average $ \hat{Y}^\bullet_{tc}$ for $c$. 
From this and \cref{eq:projection}, the component of $\hat{\mathbf{Y}}'_{t}=(I_{\hat{n}_t}-G_t^\top (G_t G_t^\top)^{-1} G_t)\hat{\mathbf{Y}}_{t}$ corresponding to $(c,d)$ is $\hat Y_{tcd}- \hat{Y}^\bullet_{tc}$.
Hence, we have $\hat{\mathbf{Y}}'_{t}=[\hat{Y}_{tcd}- \hat{Y}^\bullet_{tc}]_{(c,d)\in\hat{U}_t}$, implying that \cref{eq:projection,eq:demean_app2} coincide. 
Therefore, \cref{eq:demean_app2,eq:demean_app3} are the Euclidean projection of $\hat{\mathbf{Y}}_{t}$ onto $\{\mathbf{w}_t\in\mathbb{R}^{\hat{n}_t}:\mathbf 1^\top \mathbf{w}_{tc}=0,~\forall c\in \hat{C}_t\}$.

\setcounter{equation}{0}
\setcounter{figure}{0}
\setcounter{table}{0}
\section{Data Processing}
\label{appendix:data_processing}

The maturity dates of futures contracts are initially estimated using their expiration dates, i.e., the \texttt{LastTrdDate} field in the \texttt{DSFutContrInfo} table of the LSEG Datastream database provided by WRDS. 
However, for some contracts, the estimated maturity dates based on \texttt{LastTrdDate} differ substantially from the expiration month recorded in the \texttt{ContrDate} field of the same table, which reports only the year and month rather than the exact date. 
To address this discrepancy, we further refine the estimated maturity dates by applying the following replacements in sequence:
\begin{itemize}
    \item \textbf{Replacement 1:} a valid \texttt{EndDate} field in the \texttt{DSFutContrChg} table,
    \item \textbf{Replacement 2:} a valid \texttt{MaxDate}, or
    \item \textbf{Replacement 3:} \texttt{AppContrDate}.
\end{itemize}
Following LSEG's recommendation, we use the \texttt{EndDate} field. 
\texttt{MaxDate} denotes the latest date on which the contract is traded in the price-volume table \texttt{DSFutContrVal}. 
Here, \texttt{AppContrDate} denotes the approximated maturity date. 
It is computed by adding the median difference between \texttt{LastTrdDate} and the 15th day of \texttt{ContrDate}, calculated from contracts with the same underlying commodity, to the 15th day of \texttt{ContrDate}. 
If the median is unavailable, we use the 15th day of \texttt{ContrDate}. 
For a value of \texttt{EndDate} and \texttt{MaxDate} to be valid, it should not be significantly different from \texttt{AppContrDate}. 
In addition, \texttt{EndDate} should not be less than \texttt{MaxDate} to be valid. 

\begin{table}
\caption{Summary of Maturity Date Replacements (Total Contracts: 26,461). The first row reports the number and ratio of issues before the replacements. Each subsequent row reports the number of replacements made and the remaining number of issues.}
\label{tab:maturity_replacement}
\begin{tabular}{lrrr}
\toprule
Replacement Type & Replacements & Remaining Issues & Ratio (\%) \\
\midrule
Original & 0 & 233 & 0.88 \\
Replacement 1 & 78 & 155 & 0.59 \\
Replacement 2 & 8 & 147 & 0.56 \\
Replacement 3 & 147 & 0 & 0.00 \\
\bottomrule
\end{tabular}
\end{table}

We use the \texttt{DSFutContrVal} table for the price-volume data, along with the \texttt{DSFutContrInfo} table. 
Following the indication of the \texttt{CurrUnitCode} field in the \texttt{DSFutContrInfo} table, we convert the unit of the price data. 
However, some contracts whose underlying commodity has other contracts requiring unit conversion, as indicated by \texttt{CurrUnitCode}, are not indicated for scaling but exhibit substantially different price scales. 
We therefore rescale these price data accordingly.

\setcounter{equation}{0}
\setcounter{figure}{0}
\setcounter{table}{0}
\section{Hyperparameter Selection Statistics}
\label{appendix:hyper_selection}
\begin{table}[t]
\caption{Hyperparameter Selection Counts under Yearly Cross-Validation }
\label{tab:selection_stats}
\begin{tabular}{llrrrrr}
\toprule
 &  & HGL & \ref{ablation:a1} & \ref{ablation:a2} & Individual & MLP \\
\midrule
\multirow[t]{2}{*}{\#params} & 10000 &  & 3 &  & 5 & 6 \\
 & 100000 & 10 & 7 & 10 & 5 & 4 \\
\cline{1-7}
\multirow[t]{3}{*}{$l_\mathrm{conv}$} & 1 &  & 1 & 7 & 1 &  \\
 & 2 & 3 & 1 & 3 & 4 & 6 \\
 & 3 & 7 & 8 &  & 5 & 4 \\
\cline{1-7}
\multirow[t]{3}{*}{$\rho^*$} & 0.1 & 10 &  & 9 & 4 &  \\
 & 0.2 &  &  & 1 & 2 &  \\
 & 0.3 &  &  &  & 4 &  \\
\cline{1-7}
\multirow[t]{3}{*}{CONV} & GAT & 3 &  & 1 &  &  \\
 & GCN & 2 & 10 & 2 &  &  \\
 & SAGE & 5 &  & 7 & 10 &  \\
\bottomrule
\end{tabular}
\end{table}
Each entry in \Cref{tab:selection_stats} reports the number of times a hyperparameter value is selected across the cross-validation periods.

\setcounter{equation}{0}
\setcounter{figure}{0}
\setcounter{table}{0}
\section{Trading Universe Characteristics}
\label{appendix:universe}
\Cref{fig:universe_tradability} summarizes the time-series characteristics of the trading universe used in our experiments.

\begin{figure*}
    \centering
    \includegraphics[width=1.0\linewidth]{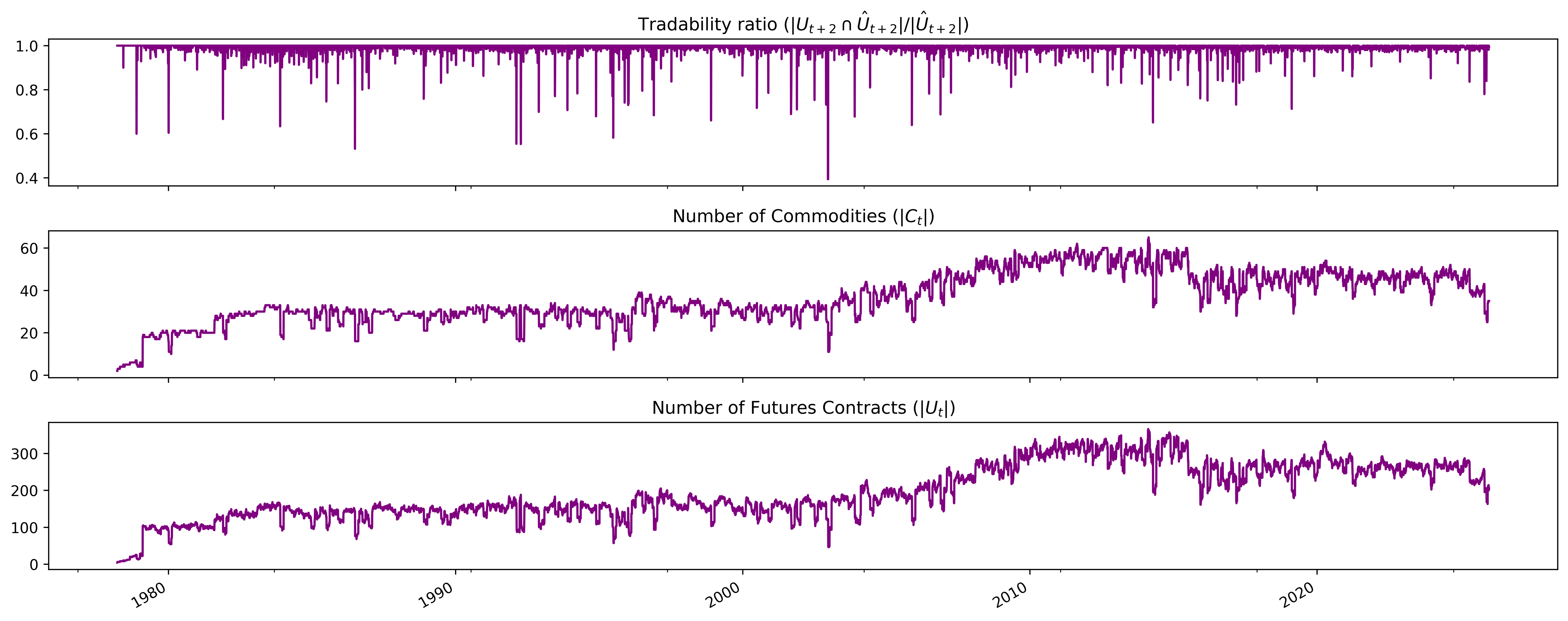}
    \caption{
    Time-Series Characteristics of the Trading Universe.
    }
    \label{fig:universe_tradability}
\end{figure*}

\setcounter{equation}{0}
\setcounter{figure}{0}
\setcounter{table}{0}
\section{Annual Trading Performance Metrics}
\label{appendix:annual}
\Cref{tab:annual_metrics1,tab:annual_metrics2} present annual trading performance metrics.
\begin{table}[t]
\caption{Annual Trading Performance Metrics (IR: information ratio; SR: Sortino ratio; Ret(\%): average daily return (\%); Vol(\%): daily return volatility (\%); Hit: hit ratio; MDD: maximum drawdown (absolute); Tvr: average daily turnover; Cor: correlation with the S\&P 500)}
\label{tab:annual_metrics1}
\resizebox{\linewidth}{!}{ \begin{tabular}{ll|rrrrrrr|rrrrrr}
\toprule
 &  & \multicolumn{7}{c}{CS} & \multicolumn{6}{c}{LO} \\
 &  & HGL & Ridge & MLP & LGBM & GNN & \ref{ablation:a1} & \ref{ablation:a2} & \ref{ablation:bm1} & \ref{ablation:bm2} & \ref{ablation:bm3} & \ref{ablation:bm4} & EW & S\&P 500 \\
Metric & Year &  &  &  &  &  &  &  &  &  &  &  &  &  \\
\midrule
\multirow[t]{10}{*}{IR} & 2016 & 0.0611 & 0.0110 & -0.0245 & 0.0017 & 0.0503 & 0.0547 & 0.0020 & 0.0592 & 0.0625 & 0.0649 & 0.0607 & 0.0587 & 0.0614 \\
 & 2017 & 0.1800 & -0.0047 & 0.0536 & 0.1394 & 0.0618 & 0.1413 & 0.0769 & 0.0186 & 0.0229 & 0.0260 & 0.0257 & 0.0181 & 0.1712 \\
 & 2018 & 0.1211 & -0.0132 & 0.0567 & 0.0746 & -0.0292 & 0.1412 & 0.0514 & -0.0394 & -0.0343 & -0.0341 & -0.0326 & -0.0398 & -0.0324 \\
 & 2019 & -0.0174 & 0.0036 & -0.0587 & -0.0299 & -0.0663 & -0.0111 & -0.0258 & 0.0356 & 0.0372 & 0.0377 & 0.0343 & 0.0354 & 0.1474 \\
 & 2020 & 0.1214 & 0.0448 & 0.0494 & 0.1049 & 0.0797 & 0.1004 & 0.0979 & 0.0343 & 0.0387 & 0.0399 & 0.0419 & 0.0339 & 0.0367 \\
 & 2021 & 0.0185 & 0.0100 & -0.0964 & -0.0888 & -0.1063 & 0.0181 & -0.0603 & 0.1468 & 0.1461 & 0.1454 & 0.1489 & 0.1469 & 0.1261 \\
 & 2022 & 0.1123 & 0.0243 & 0.0756 & 0.0525 & 0.0244 & 0.0785 & 0.0566 & 0.0618 & 0.0639 & 0.0700 & 0.0681 & 0.0612 & -0.0495 \\
 & 2023 & 0.0478 & 0.0656 & 0.0599 & 0.0482 & 0.0359 & 0.0456 & 0.0763 & -0.0605 & -0.0612 & -0.0575 & -0.0586 & -0.0607 & 0.1009 \\
 & 2024 & 0.0906 & 0.0605 & 0.0146 & 0.0225 & -0.0023 & 0.0982 & 0.0168 & -0.0218 & -0.0152 & -0.0129 & -0.0169 & -0.0225 & 0.1202 \\
 & 2025 & 0.1435 & 0.1225 & 0.1179 & 0.0997 & 0.0781 & 0.1337 & 0.1228 & 0.0423 & 0.0513 & 0.0546 & 0.0490 & 0.0416 & 0.0541 \\
\cline{1-15}
\multirow[t]{10}{*}{SR} & 2016 & 0.0909 & 0.0150 & -0.0360 & 0.0024 & 0.0697 & 0.0723 & 0.0028 & 0.1058 & 0.1109 & 0.1161 & 0.1084 & 0.1050 & 0.0804 \\
 & 2017 & 0.2980 & -0.0070 & 0.0969 & 0.2430 & 0.0979 & 0.2433 & 0.1216 & 0.0295 & 0.0370 & 0.0420 & 0.0412 & 0.0287 & 0.2296 \\
 & 2018 & 0.1719 & -0.0149 & 0.0798 & 0.1149 & -0.0352 & 0.2136 & 0.0797 & -0.0523 & -0.0458 & -0.0455 & -0.0436 & -0.0529 & -0.0395 \\
 & 2019 & -0.0236 & 0.0066 & -0.0778 & -0.0414 & -0.0824 & -0.0124 & -0.0334 & 0.0529 & 0.0554 & 0.0563 & 0.0508 & 0.0526 & 0.1871 \\
 & 2020 & 0.1591 & 0.0561 & 0.0715 & 0.1692 & 0.1123 & 0.1412 & 0.1396 & 0.0402 & 0.0462 & 0.0476 & 0.0495 & 0.0397 & 0.0417 \\
 & 2021 & 0.0280 & 0.0154 & -0.1471 & -0.1299 & -0.1421 & 0.0283 & -0.0917 & 0.1908 & 0.1920 & 0.1905 & 0.1951 & 0.1908 & 0.1787 \\
 & 2022 & 0.2017 & 0.0303 & 0.1477 & 0.0998 & 0.0434 & 0.1316 & 0.0933 & 0.0822 & 0.0873 & 0.0961 & 0.0921 & 0.0813 & -0.0793 \\
 & 2023 & 0.0719 & 0.0963 & 0.1092 & 0.0898 & 0.0605 & 0.0723 & 0.1279 & -0.0959 & -0.0963 & -0.0906 & -0.0919 & -0.0962 & 0.1665 \\
 & 2024 & 0.1340 & 0.0830 & 0.0241 & 0.0330 & -0.0036 & 0.1383 & 0.0244 & -0.0353 & -0.0248 & -0.0210 & -0.0274 & -0.0363 & 0.1585 \\
 & 2025 & 0.2469 & 0.2250 & 0.1766 & 0.1751 & 0.1406 & 0.2114 & 0.2058 & 0.0589 & 0.0712 & 0.0764 & 0.0691 & 0.0577 & 0.0680 \\
\cline{1-15}
\multirow[t]{10}{*}{Ret (\%)} & 2016 & 0.0043 & 0.0012 & -0.0019 & 0.0002 & 0.0055 & 0.0037 & 0.0002 & 0.0449 & 0.0459 & 0.0477 & 0.0461 & 0.0446 & 0.0505 \\
 & 2017 & 0.0101 & -0.0004 & 0.0046 & 0.0096 & 0.0052 & 0.0076 & 0.0056 & 0.0093 & 0.0112 & 0.0130 & 0.0129 & 0.0090 & 0.0723 \\
 & 2018 & 0.0120 & -0.0022 & 0.0071 & 0.0091 & -0.0056 & 0.0134 & 0.0063 & -0.0217 & -0.0187 & -0.0188 & -0.0180 & -0.0220 & -0.0351 \\
 & 2019 & -0.0013 & 0.0004 & -0.0050 & -0.0028 & -0.0077 & -0.0008 & -0.0025 & 0.0198 & 0.0201 & 0.0206 & 0.0191 & 0.0197 & 0.1136 \\
 & 2020 & 0.0213 & 0.0091 & 0.0127 & 0.0229 & 0.0212 & 0.0143 & 0.0212 & 0.0343 & 0.0372 & 0.0388 & 0.0421 & 0.0339 & 0.0797 \\
 & 2021 & 0.0023 & 0.0016 & -0.0120 & -0.0113 & -0.0159 & 0.0019 & -0.0077 & 0.1228 & 0.1206 & 0.1204 & 0.1238 & 0.1230 & 0.1033 \\
 & 2022 & 0.0231 & 0.0059 & 0.0188 & 0.0129 & 0.0065 & 0.0151 & 0.0125 & 0.0786 & 0.0774 & 0.0860 & 0.0856 & 0.0780 & -0.0754 \\
 & 2023 & 0.0048 & 0.0089 & 0.0075 & 0.0070 & 0.0043 & 0.0043 & 0.0093 & -0.0463 & -0.0448 & -0.0428 & -0.0448 & -0.0465 & 0.0833 \\
 & 2024 & 0.0111 & 0.0093 & 0.0017 & 0.0033 & -0.0003 & 0.0096 & 0.0020 & -0.0143 & -0.0097 & -0.0085 & -0.0112 & -0.0148 & 0.0959 \\
 & 2025 & 0.0160 & 0.0182 & 0.0153 & 0.0138 & 0.0122 & 0.0137 & 0.0141 & 0.0328 & 0.0389 & 0.0426 & 0.0383 & 0.0322 & 0.0640 \\
\cline{1-15}
\multirow[t]{10}{*}{Vol (\%)} & 2016 & 0.0708 & 0.1053 & 0.0796 & 0.0917 & 0.1096 & 0.0679 & 0.0776 & 0.7580 & 0.7350 & 0.7362 & 0.7598 & 0.7600 & 0.8211 \\
 & 2017 & 0.0560 & 0.0852 & 0.0860 & 0.0690 & 0.0847 & 0.0536 & 0.0723 & 0.4986 & 0.4913 & 0.4977 & 0.5004 & 0.4988 & 0.4221 \\
 & 2018 & 0.0990 & 0.1682 & 0.1251 & 0.1216 & 0.1901 & 0.0951 & 0.1232 & 0.5518 & 0.5472 & 0.5503 & 0.5512 & 0.5522 & 1.0831 \\
 & 2019 & 0.0755 & 0.1016 & 0.0849 & 0.0931 & 0.1156 & 0.0748 & 0.0956 & 0.5556 & 0.5404 & 0.5455 & 0.5579 & 0.5566 & 0.7711 \\
 & 2020 & 0.1752 & 0.2041 & 0.2567 & 0.2180 & 0.2654 & 0.1420 & 0.2162 & 0.9990 & 0.9597 & 0.9708 & 1.0046 & 1.0016 & 2.1705 \\
 & 2021 & 0.1251 & 0.1583 & 0.1242 & 0.1276 & 0.1496 & 0.1039 & 0.1273 & 0.8360 & 0.8257 & 0.8282 & 0.8320 & 0.8370 & 0.8190 \\
 & 2022 & 0.2057 & 0.2423 & 0.2490 & 0.2451 & 0.2651 & 0.1919 & 0.2211 & 1.2709 & 1.2109 & 1.2292 & 1.2573 & 1.2742 & 1.5232 \\
 & 2023 & 0.1014 & 0.1360 & 0.1245 & 0.1450 & 0.1205 & 0.0947 & 0.1222 & 0.7647 & 0.7316 & 0.7449 & 0.7647 & 0.7667 & 0.8261 \\
 & 2024 & 0.1222 & 0.1530 & 0.1182 & 0.1474 & 0.1373 & 0.0982 & 0.1200 & 0.6548 & 0.6422 & 0.6593 & 0.6639 & 0.6549 & 0.7974 \\
 & 2025 & 0.1112 & 0.1486 & 0.1300 & 0.1386 & 0.1568 & 0.1022 & 0.1148 & 0.7745 & 0.7583 & 0.7797 & 0.7803 & 0.7745 & 1.1835 \\
\cline{1-15}
\multirow[t]{10}{*}{MDD} & 2016 & 0.0126 & 0.0154 & 0.0151 & 0.0144 & 0.0154 & 0.0090 & 0.0145 & 0.0905 & 0.0852 & 0.0825 & 0.0896 & 0.0912 & 0.0820 \\
 & 2017 & 0.0046 & 0.0144 & 0.0103 & 0.0059 & 0.0075 & 0.0066 & 0.0063 & 0.0746 & 0.0673 & 0.0686 & 0.0727 & 0.0751 & 0.0281 \\
 & 2018 & 0.0108 & 0.0300 & 0.0177 & 0.0190 & 0.0346 & 0.0066 & 0.0142 & 0.1186 & 0.1087 & 0.1103 & 0.1115 & 0.1194 & 0.2145 \\
 & 2019 & 0.0084 & 0.0104 & 0.0186 & 0.0144 & 0.0258 & 0.0101 & 0.0126 & 0.0881 & 0.0798 & 0.0783 & 0.0870 & 0.0889 & 0.0699 \\
 & 2020 & 0.0187 & 0.0230 & 0.0319 & 0.0286 & 0.0237 & 0.0144 & 0.0186 & 0.2884 & 0.2724 & 0.2737 & 0.2842 & 0.2896 & 0.3825 \\
 & 2021 & 0.0142 & 0.0216 & 0.0289 & 0.0277 & 0.0414 & 0.0152 & 0.0199 & 0.0668 & 0.0682 & 0.0674 & 0.0668 & 0.0667 & 0.0527 \\
 & 2022 & 0.0200 & 0.0507 & 0.0205 & 0.0267 & 0.0441 & 0.0209 & 0.0267 & 0.1862 & 0.1683 & 0.1677 & 0.1749 & 0.1876 & 0.2562 \\
 & 2023 & 0.0083 & 0.0169 & 0.0100 & 0.0152 & 0.0150 & 0.0089 & 0.0095 & 0.1539 & 0.1440 & 0.1447 & 0.1557 & 0.1547 & 0.1065 \\
 & 2024 & 0.0097 & 0.0152 & 0.0171 & 0.0145 & 0.0204 & 0.0072 & 0.0180 & 0.1362 & 0.1319 & 0.1291 & 0.1325 & 0.1368 & 0.0872 \\
 & 2025 & 0.0077 & 0.0160 & 0.0091 & 0.0094 & 0.0168 & 0.0088 & 0.0075 & 0.0939 & 0.0886 & 0.0893 & 0.0924 & 0.0942 & 0.2041 \\
\cline{1-15}
\multirow[t]{10}{*}{Hit} & 2016 & 0.5397 & 0.5119 & 0.5119 & 0.5159 & 0.5198 & 0.5357 & 0.5000 & 0.5198 & 0.5159 & 0.5238 & 0.5238 & 0.5198 & 0.5238 \\
 & 2017 & 0.5720 & 0.4680 & 0.5080 & 0.5680 & 0.5360 & 0.5600 & 0.5320 & 0.5360 & 0.5480 & 0.5480 & 0.5440 & 0.5360 & 0.5720 \\
 & 2018 & 0.5595 & 0.4960 & 0.4921 & 0.5040 & 0.4802 & 0.5714 & 0.5079 & 0.5040 & 0.5198 & 0.5159 & 0.5040 & 0.5040 & 0.5198 \\
 & 2019 & 0.4841 & 0.4722 & 0.5119 & 0.5040 & 0.5000 & 0.5198 & 0.5437 & 0.5357 & 0.5357 & 0.5278 & 0.5397 & 0.5357 & 0.5952 \\
 & 2020 & 0.5573 & 0.5296 & 0.5336 & 0.5573 & 0.5296 & 0.5375 & 0.5613 & 0.5573 & 0.5613 & 0.5534 & 0.5573 & 0.5573 & 0.5731 \\
 & 2021 & 0.5198 & 0.5278 & 0.4603 & 0.4603 & 0.4405 & 0.5119 & 0.4762 & 0.5833 & 0.5754 & 0.5754 & 0.5754 & 0.5833 & 0.5675 \\
 & 2022 & 0.5179 & 0.5020 & 0.4861 & 0.4861 & 0.4861 & 0.5179 & 0.5020 & 0.5697 & 0.5498 & 0.5578 & 0.5657 & 0.5737 & 0.4303 \\
 & 2023 & 0.5280 & 0.5320 & 0.5160 & 0.5240 & 0.5200 & 0.5280 & 0.5400 & 0.4760 & 0.4720 & 0.4640 & 0.4680 & 0.4760 & 0.5440 \\
 & 2024 & 0.5397 & 0.5278 & 0.5119 & 0.5159 & 0.4960 & 0.5317 & 0.5119 & 0.5159 & 0.5119 & 0.5079 & 0.5119 & 0.5119 & 0.5714 \\
 & 2025 & 0.5783 & 0.5020 & 0.5823 & 0.5261 & 0.5301 & 0.5301 & 0.5382 & 0.5141 & 0.5100 & 0.5100 & 0.5181 & 0.5141 & 0.5743 \\
\bottomrule
\end{tabular}}
\end{table}

\begin{table}[t]
\caption{Annual Trading Performance Metrics (IR: information ratio; SR: Sortino ratio; Ret(\%): average daily return (\%); Vol(\%): daily return volatility (\%); Hit: hit ratio; MDD: maximum drawdown (absolute); Tvr: average daily turnover; Cor: correlation with the S\&P 500)}
\label{tab:annual_metrics2}
\resizebox{\linewidth}{!}{ \begin{tabular}{ll|rrrrrrr|rrrrrr}
\toprule
 &  & \multicolumn{7}{c}{CS} & \multicolumn{6}{c}{LO} \\
 &  & HGL & Ridge & MLP & LGBM & GNN & \ref{ablation:a1} & \ref{ablation:a2} & \ref{ablation:bm1} & \ref{ablation:bm2} & \ref{ablation:bm3} & \ref{ablation:bm4} & EW & S\&P 500 \\
Metric & Year &  &  &  &  &  &  &  &  &  &  &  &  &  \\
\midrule
\multirow[t]{10}{*}{Tvr} & 2016 & 0.5276 & 0.6454 & 0.4692 & 0.6413 & 0.5683 & 0.5884 & 0.5114 & 0.0582 & 0.2663 & 0.2512 & 0.2262 & 0.0547 &  \\
 & 2017 & 0.5583 & 0.6547 & 0.6300 & 0.7193 & 0.6138 & 0.5790 & 0.5902 & 0.0554 & 0.2882 & 0.2566 & 0.2318 & 0.0520 &  \\
 & 2018 & 0.6625 & 0.6705 & 0.4946 & 0.6540 & 0.7327 & 0.5235 & 0.5890 & 0.0500 & 0.3197 & 0.2444 & 0.2257 & 0.0431 &  \\
 & 2019 & 0.6274 & 0.4144 & 0.5628 & 0.6654 & 0.6017 & 0.5541 & 0.5415 & 0.0524 & 0.3078 & 0.2735 & 0.2481 & 0.0469 &  \\
 & 2020 & 0.5313 & 0.6429 & 0.5680 & 0.6852 & 0.5764 & 0.5615 & 0.5032 & 0.0472 & 0.2592 & 0.2532 & 0.2231 & 0.0430 &  \\
 & 2021 & 0.6155 & 0.6048 & 0.4857 & 0.6504 & 0.5687 & 0.4720 & 0.4963 & 0.0510 & 0.2984 & 0.2353 & 0.2180 & 0.0467 &  \\
 & 2022 & 0.5350 & 0.7176 & 0.5349 & 0.5913 & 0.5300 & 0.6061 & 0.5649 & 0.0483 & 0.2646 & 0.2492 & 0.2262 & 0.0466 &  \\
 & 2023 & 0.5039 & 0.8716 & 0.4988 & 0.6557 & 0.5084 & 0.4757 & 0.5332 & 0.0462 & 0.2595 & 0.2120 & 0.1948 & 0.0433 &  \\
 & 2024 & 0.5400 & 0.4508 & 0.4357 & 0.6471 & 0.6230 & 0.4480 & 0.5234 & 0.0471 & 0.2696 & 0.2576 & 0.2159 & 0.0426 &  \\
 & 2025 & 0.5149 & 0.6623 & 0.5022 & 0.6805 & 0.6556 & 0.5749 & 0.5621 & 0.0496 & 0.2667 & 0.2881 & 0.2324 & 0.0462 &  \\
\cline{1-15}
\multirow[t]{10}{*}{Cor} & 2016 & 0.0300 & 0.0181 & 0.1579 & 0.1066 & 0.1552 & 0.0461 & 0.1137 & 0.3874 & 0.3834 & 0.3816 & 0.3892 & 0.3878 & 1.0000 \\
 & 2017 & 0.0426 & -0.0565 & 0.0582 & 0.0358 & 0.0161 & 0.0129 & 0.0352 & 0.0387 & 0.0288 & 0.0319 & 0.0405 & 0.0391 & 1.0000 \\
 & 2018 & 0.0235 & 0.1229 & 0.0838 & 0.1115 & 0.0506 & 0.0588 & 0.1405 & 0.3021 & 0.3035 & 0.3048 & 0.3027 & 0.3017 & 1.0000 \\
 & 2019 & -0.0898 & -0.0621 & -0.0365 & -0.0151 & -0.0259 & 0.0345 & -0.0353 & 0.3595 & 0.3561 & 0.3538 & 0.3554 & 0.3599 & 1.0000 \\
 & 2020 & 0.0267 & 0.1062 & 0.0783 & 0.0943 & 0.0663 & -0.0394 & 0.0140 & 0.5252 & 0.5189 & 0.5224 & 0.5257 & 0.5252 & 1.0000 \\
 & 2021 & -0.0747 & -0.0416 & -0.0602 & -0.1115 & -0.0496 & -0.1307 & -0.0737 & 0.2850 & 0.2754 & 0.2806 & 0.2833 & 0.2854 & 1.0000 \\
 & 2022 & -0.0512 & 0.0293 & 0.0167 & 0.0275 & 0.0216 & -0.0444 & -0.0025 & 0.1496 & 0.1509 & 0.1492 & 0.1483 & 0.1497 & 1.0000 \\
 & 2023 & 0.0499 & 0.0573 & 0.0062 & 0.0155 & 0.0741 & 0.0283 & 0.0493 & 0.2752 & 0.2799 & 0.2789 & 0.2779 & 0.2748 & 1.0000 \\
 & 2024 & -0.0012 & 0.0017 & 0.0170 & -0.0009 & 0.0515 & 0.0102 & 0.0144 & 0.1219 & 0.1272 & 0.1261 & 0.1196 & 0.1218 & 1.0000 \\
 & 2025 & -0.1068 & 0.0305 & -0.1755 & -0.0624 & 0.0287 & -0.1505 & -0.0576 & 0.3426 & 0.3353 & 0.3294 & 0.3340 & 0.3434 & 1.0000 \\
\bottomrule
\end{tabular}}
\end{table}
\clearpage
\label{LastPage}
\addtocounter{page}{-1} %I added

\end{document}